\newtheorem{theorem}{Theorem}
\newtheorem{lemma}{Lemma}
\newtheorem{proposition}{Proposition}
\theoremstyle{definition}
\newtheorem{corollary}{Corollary}
\newtheorem{definition}{Definition}
\newtheorem{remark}{Remark}
\def\BP{\mathsf{P}}
\def\BR{\mathbb{R}}
\def\d{\mathrm{d}}
\def\ee{\mathrm{e}}
\def\w{\omega}
\def\ep{\varepsilon}
\def\EE{\mathcal{E}}
\newcommand{\df}[1]{\textit{#1}}
\DeclarePairedDelimiter{\norm}{\lVert}{\rVert}
\newcommand{\Rn}[1]{\expandafter\@slowromancap\romannumeral #1@}
\DeclareMathOperator*{\argmin}{arg\,min}
\numberwithin{equation}{section}
\title{Individual and Collective Welfare in Risk Sharing with Many States\thanks{We thank Itzhak Gilboa and Peter Klibanoff for questions that helped clarify our writing; and PJ Healy for some constructive skepticism. We are grateful to seminar audiences at Caltech, PET 2024, UC Berkeley, MSET 2024 (Durham University), D-TEA 2024, and the 2024 Latin American Econometric Society meetings, for questions and feedback.
Echenique (\href{mailto:fede@econ.berkeley.edu}{fede@econ.berkeley.edu}) is at the Department of Economics, UC Berkeley, and Pourbabaee (\href{mailto:farzad.pourbabaee@ucf.com}{farzad.pourbabaee@ucf.com}) is at the College of Business, University of Central Florida.}}
\author{
  Federico Echenique and Farzad Pourbabaee
}
\date{\today}
\renewcommand\footnotemark{}
\begin{document}

\maketitle

\begin{abstract}
We study efficient risk sharing among risk-averse agents in an economy with a large, finite number of states. Following a random shock to an initial agreement, agents may renegotiate. If they require a minimal utility improvement to accept a new deal, we show the probability of finding a mutually acceptable allocation vanishes exponentially as the state space grows. This holds regardless of agents' degree of risk aversion. In a two-agent multiple-priors model, we find that the potential for Pareto-improving trade requires that at least one agent’s set of priors has a vanishingly small measure. 
Our results hinge on the ``shape does not matter'' message of high-dimensional isoperimetric inequalities.  
\end{abstract}

\vspace{2cm}

\clearpage
\setstretch{1.35}
\tableofcontents
\clearpage

\interfootnotelinepenalty=10000
\medmuskip= 0.5mu plus 1.0mu minus 1.0mu
\section{Introduction}

An economy without aggregate risk is populated by a group of risk-averse agents.  The agents reach an efficient risk-sharing agreement that involves, for each of them, a state-contingent consumption plan.\footnote{Efficient risk sharing has a long history in economics, see e.g.\ \cite{borch1962} and~\cite{malinvaud1973}.} The agreement provides a certain level of welfare. Before executing the agreement, the agents learn of a random perturbation to the economy that induces aggregate, or systemic, uncertainty. Now the group may wish to renegotiate their initial agreement, but will only do so if they can guarantee themselves a minimal welfare improvement compared to what they had before the perturbation. There are transaction costs to renegotiating that must be covered. We ask: how likely is it that the agents can achieve this improvement? \textit{Our main result is that, when the number of states is large, the existence of a welfare-improving renegotiation is extremely unlikely. Its probability decreases exponentially in the number of states.}

The diagram on the left in  Figure~\ref{fig:introscitovsky} illustrates the situation, albeit in the case with two states ($d=2$) and two agents. Aggregate consumption is fixed at 1 (say 1 million dollars) in every state, represented by the vector $\w=(1,1)$. Aggregate consumption is state-independent, and thus implies no aggregate risk. The square defined by the origin and $\w=(1,1)$ is an Edgeworth box that describes all the allocations---all risk-sharing agreements---between the two agents. The Pareto optimal allocations are on the solid curve connecting the origin and $\w$, while $f=(f_1,f_2)$ is a particular Pareto optimal allocation (indicated by the tangency of the agents' indifference curves at the point $f$, with the usual Edgeworth-box convention that one agent's coordinate system has been rotated $180^{\circ}$ and its origin coincides with $\w$). For the allocation $f$, we may consider all the aggregate consumptions that can be decentralized among the two agents so as to obtain higher utility ``by $\ep$'' than in $f$.\footnote{An improvement by $\ep$ means a state-contingent consumption $g$ so that $(1-\ep)g$ is strictly preferred to $f_i$. If, for example, preferences are homothetic, this translates into a utility improvement of $\ep/(1-\ep)$\%.} This set of aggregate bundles consists of all the bundles on the dotted region to the north-east of the dashed curve passing through $\w$ (a curve that, at $\w$, has the same slope as the common tangent to agents' indifference curves at $f$).

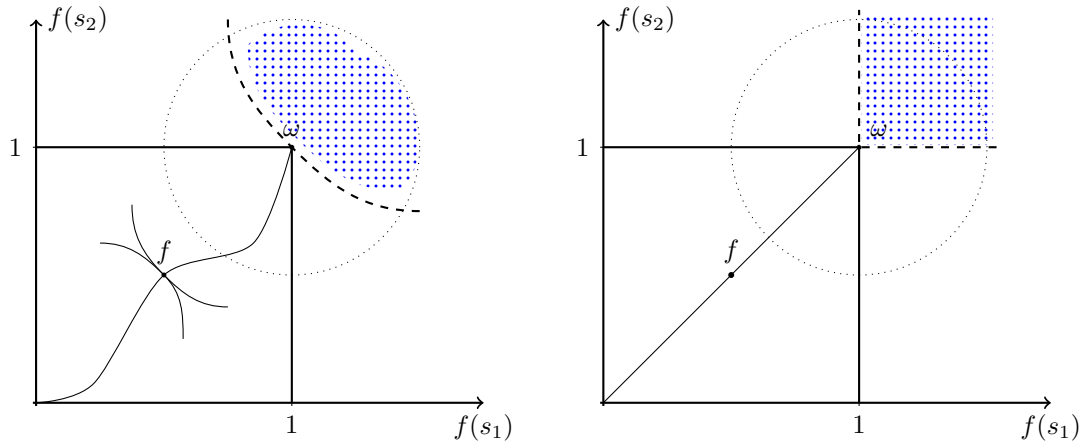
\begin{figure}[tth]
  \centering
\begin{tikzpicture}[scale=.85]
\draw[->, thick] (-.05,0) -- (7,0) node[below] {\footnotesize $f(s_1)$}; \draw[->, thick] (0,-.05) -- (0,6) node[right] {\footnotesize $f(s_2)$};

\coordinate (endow) at (4,4);

\draw[white,pattern=dots,pattern color=blue] (3.3,5.6) to
     [out=-90,in=135] (4.1,4.1) to
     [out=-45,in=180] (5.7,3.3) to
     [out=45,in=-15] (5,5.5) to 
[out=125,in=45] (3.3,5.6) ;

\draw[thick] (0,4) -- (endow) node[above] {\footnotesize $\omega$} -- (4,0);
\draw[fill] (endow) circle (.03);
\draw[thin] (4,0) -- (4,-0.05) node[below]  {\footnotesize $1$};
\draw[thin] (0,4) -- (-0.05,4) node[left]  {\footnotesize $1$};

\coordinate (split) at (2,2);
\draw[fill] (split) circle (.03) node[above] {\footnotesize $f$};

\draw[] (1.5,3.1) to [out=-90,in=135] (split);
\draw[] (3,1.5) to [out=180,in=-45] (split);
\draw[] (1,2.5) to [out=0,in=135] (split);
\draw[] (2.3,1) to [out=90,in=-45] (split);

\draw plot [smooth, thick]  coordinates {(0,0) (.9,0.3) (split) (3.4,2.5) (endow)};

\draw[dashed, thick] (3,6) to [out=-90,in=135] (endow);
\draw[dashed, thick] (6,3) to [out=180,in=-45] (endow);
\draw[dotted] (endow) circle (2);
\end{tikzpicture}
\quad
\begin{tikzpicture}[scale=.85]
\draw[->, thick] (-.05,0) -- (7,0) node[below] {\footnotesize $f(s_1)$}; \draw[->, thick] (0,-.05) -- (0,6) node[right] {\footnotesize $f(s_2)$};

\coordinate (endow) at (4,4);
\draw (0,0) -- (endow);
\draw[fill] (endow) circle (.03);
\draw[thin] (4,0) -- (4,-0.05) node[below]  {\footnotesize $1$};
\draw[thin] (0,4) -- (-0.05,4) node[left]  {\footnotesize $1$};

\draw[thick] (0,4) -- (endow) node[above right] {\footnotesize $\omega$} -- (4,0);

\coordinate (split) at (2,2);
\draw[fill, thick] (split) circle (.03) node[above] {\footnotesize $f$};


\draw[white,pattern=dots,pattern color=blue]  (6.1,4.03) -- (4.03,4.03) -- (4.03,6.05) -- (6.1,6.1) -- (6.1,4.03);
\draw[thick,dashed] (6.15,4) -- (endow) -- (4,6.15);
\draw[dotted] (endow) circle (2);

\end{tikzpicture}
    \caption{Risk sharing with two states and two agents.}
    \label{fig:introscitovsky}
\end{figure}

The aggregate consumption $(1,1)$ is subject to a random perturbation, an  aggregate shock $z$. Suppose that $z$ is symmetrically distributed around $(0,0)$; in fact, uniformly distributed among all $z$ with $\norm{z}\leq r$---for some $r>0$. Our results say that the probability that $z$ lies in the dotted region, and thus that aggregate risk can be decentralized among the agents in ways that improve individual welfare by $\ep$, is bounded above by $\ee^{-\ep^2 d/8r^2}$.  For example, when  $r=1$, $\ep=0.1$, and $d$ is the number of securities traded on NASDAQ, this bound is about 0.67\%. The point is that the probability of an agreeable renegotiation shrinks exponentially to zero as the number of states grows.

The result is surprising because it holds regardless of the agents' attitudes towards risk. For contrast, imagine instead that agents are extremely risk averse. Any contingent consumption is evaluated according to a worst-case scenario, which is determined by the smallest payoff in any given state. The picture on the right in Figure~\ref{fig:introscitovsky} illustrates the situation. Now an aggregate shock $z$ can only be distributed to make all agents better off if $z(s)>0$ in every state $s$, because if there is a state with less aggregate consumption, then some agent must consume less, and therefore be worse off, under the assumption of extreme risk aversion. The probability of such a perturbation is $1/2^d$, which also shrinks exponentially to zero. 

Comparing the pictures on the left and right of Figure~\ref{fig:introscitovsky} emphasizes the real message of our results. On the left, agents' limited risk aversion (the curvature of their indifference curves) leaves significant room for a welfare improvement after the shock $z$. When agents are close to risk-neutral, and $d=2$, the probability of a small $\ep$ improvement after the shock becomes arbitrarily close to $1/2$. As the number of states grows large, however, the probability inexorably decreases to zero. We can even assume that the allocation $f$ is supported by an equilibrium price vector $p$, and restrict attention to shocks $z$ with $p\cdot z>0$, the shocks that have some chance of being welfare-improving. Then the probability in the left Edgeworth box (with $d=2$) is arbitrarily close to $1$; while, again, as $d$ grows, it shrinks exponentially to zero.

\clearpage
Our paper presents three results on economies with many states.

The first result (Theorem~\ref{thm:walras_equil}) concerns a single agent's welfare. Imagine, again, a group of agents and an efficient allocation $f$. Suppose this time that the allocation is supported by some equilibrium prices; it is part of a Walrasian (or Arrow-Debreu) equilibrium. Given a shock $z$, the group looks for \textit{one} agent $i$ who is willing to absorb the shock, meaning that $f_i+z$ provides them with a welfare improvement of at least $\ep$ over $f_i$. The number of agents may be very large, and the group just needs one of them to take on $z$. The probability that there exists such an agent decreases, again, exponentially to zero with the number of states.

The second main result (Theorem~\ref{thm:scitovsky}) concerns an aggregate shock, and the ability to renegotiate an efficient risk-sharing agreement after the shock is realized. This is the result we have described using Figure~\ref{fig:introscitovsky}. In Proposition~\ref{prop:CRU}, we recast this result for allocations that represent inefficient risk sharing. The degree of inefficiency is measured by Debreu's coefficient of resource utilization.

Our third main result (Theorem~\ref{thm:small_belief_set}) is not explicitly about welfare, but instead about the attitudes towards uncertainty by agents who engage in efficient risk sharing. We
consider the same setting of economic exchange with no aggregate uncertainty, as in
our second result; but we strengthen the assumptions on preferences to focus on
utilities with multiple priors. Think, to fix ideas, of the max-min expected utility preferences of~\cite{gilboa1989maxmin}. Now we prove that, if the risk sharing agreement between the two agents can be improved in the Pareto sense, then at least one of the two must have a small set of prior beliefs. The stronger the level of Pareto improvement, the smaller the size of the set of prior beliefs. This means that at least one of the two agents must, in some sense, be close to being ambiguity neutral.

The sources of our, perhaps counterintuitive, results are some equally counterintuitive phenomena in high-dimensional probability. Our results rely on the study of concentration of measures, and specifically on the  implication of an \textit{isoperimetric inequality} for the separation of convex
sets. We explore the consequences of the ``shapes do not matter'' message from the literature in probability for economics. Key is the role of $\ep$, or of the coefficient of resource utilization, in our discussion above. These magnitudes provide a certain ``padding,'' or bound, on the degree of separation of the relevant convex sets. The phenomenon of concentration of measure
then has strong consequences for the volume of one of the sets being separated: one of the separated sets must have small volume. In our model, we can usually determine which of the sets must have small volume, and thus obtain the results we have described. Given its importance, we discuss the role and magnitude of $\ep$ in Section~\ref{sec:onepsilon}.

\section{The Model}\label{sec:themodel}

\subsection{Notations and Conventions}\label{sec:notationconventions}
Before presenting our model and main results, we lay down some of the basic definitions we shall make use of, as well as a few notational conventions.

If $S$ is a finite set, we denote by $\Delta S=\{\mu:S\mapsto \BR_+ \mid \sum_{s\in S}\mu(s)=1 \}$ the set of all probability measures on $S$.  We embed $\Delta S$ as a subset of $\BR_+^d$, and sometimes we refer to this probability simplex by $\Delta_d$. 

Let $A$ be a subset of a finite-dimensional normed vector space $(\BR^m, \norm{\cdot})$. The distance of a point $x\in \BR^m$ from $A$ is defined by $\mathsf{dist}(x, A) \coloneq \inf_{a \in A} \norm{x -a}$. By default, the norm used in the paper is the Euclidean $\ell_2$ norm. When a particular $p$-norm is used, we refer to the distance function by $\mathsf{dist}_p$, and the norm by $\norm{\cdot}_p$. For the $\ell_2$ norm, we usually omit the subscript~2. 

We represent the Euclidean open ball centered at $c$ and with radius $r$ by $\mathbb{B}(c,r)\coloneq \{x\in \BR^d:\norm{x-c}<r \}$. When the center is omitted, we take it to be the null vector $\bm{0}$. When the radius is omitted, we assume it is $r=1$. Thus $\mathbb B$ denotes the standard (open) unit ball. 

We denote the \textit{uniform} probability law on $\mathbb{B}(r)$ by $\BP^r$. We further define the class of (Borel) probability measures supported on $\mathbb{B}(r)$ that are absolutely continuous with respect to $\BP^r$, with their Radon-Nikodym derivative almost everywhere bounded above by a constant $\kappa$, as follows:
\begin{equation*}
    \mathcal{M}^r_\kappa \coloneq\left\{\BP^r_\kappa \in \Delta\left(\mathbb{B}(r)\right)\colon \frac{\d \BP^r_\kappa}{\d \BP^r}(z) \leq \kappa , \, \, \text{a.e.} \,\, z\in \mathbb{B}(r) \right\}\,.
\end{equation*}
For example, the $d$-dimensional standard Gaussian measure restricted to the $r$-ball has a bounded Radon-Nikodym derivative with respect to $\BP^r$, with the constant $\kappa \leq \ee^{r^2/2}$ (see Section~\ref{app:constrained_gaussian}).

For two subsets $A, B \subseteq \BR^m$, we define $\mathsf{dist}(A, B) \coloneq \inf\left\{ \norm{a - b} \colon a\in A, b\in B\right\}$. We denote the \emph{$\delta$-extension} of the subset $A$ by $A^\delta = \{x \colon \mathsf{dist}(x,A)<\delta\}$, which coincides with $A + \delta\, \mathbb{B}$.

Given a measurable subset $A\subseteq \BR^m$, we denote its Euclidean volume by $\mathsf{Vol}(A)$, which is equal to the Lebesgue integral of the indicator function of $A$ relative to the affine hull of $A$. For example, if $A$ is a $(m-1)$-dimensional surface in $\BR^m$, then $\mathsf{Vol}(A)$ refers to the surface area of $A$, as opposed to its $m$-dimensional volume (which is zero).

\subsection{Preferences and Uncertainty}
We consider a setting with uncertainty.\footnote{Using the reinterpretation of commodities in \cite{debreu1959theory}, our results have implications for many other economic environments. For example, they apply to the textbook version of the model, which assumes that consumption is in units of physically distinct goods.} Let $S$ be a finite set of possible \df{states of the world}, and $d$ be the number of states of the world;  $d\coloneq |S|$. 

Consequences are evaluated based on their monetary payoff in $\BR$. The set of \df{acts}, mappings from $S \to \BR$, is denoted by $\BR^d$, and individual acts are denoted by $f$ and $g$. We endow this set with its standard topology. 

Let $\succeq$ be a binary relation on $\BR^d$. As usual, we denote the strict part of $\succeq$ by $\succ$, and the associated indifference relation by $\sim$. We say that $\succeq$ is a (weakly monotone) \emph{preference relation} if it satisfies the following properties:
\begin{itemize}
    \item[--] (Weak Order): $\succeq$ is complete and transitive.
    \item[--] (Continuity): The upper and lower contour sets are closed.\\ That is, for every $f \in \BR^d$, the sets $\{g\colon g \succeq f\}$ and $\{g\colon f\succeq g\}$ are closed subsets of $\BR^d$.
    \item[--] (Monotonicity): For all $f, g \in \BR^d$ if $f(s) \geq g(s)$ for all $s \in S$, then $f \succeq g$. Furthermore, if $f(s)>g(s)$ for all $s \in S$, then $f \succ g$.
\end{itemize} 
The space of such preference relations on $\BR^d$ is denoted by $\mathcal{P}$. We say that $\succeq$ is a \textit{strongly monotonic} preference relation if, in addition to the weak order and continuity properties, it satisfies strong monotonicity: Namely, for all $f, g\in \BR^d$ if $f(s) \geq g(s)$ for all $s\in S$ and $f \neq g$, then $f \succ g$. We denote the space of strongly monotone preferences by $\mathcal{P}^{\text{sm}} \subset \mathcal{P}$.

Given a preference $\succeq$ and an act $f$, the set $\{g\colon g \succeq f\}$ is called the \df{upper contour set} of $\succeq$ at $f$. We say that a preference $\succeq$ is \df{convex} if its upper contour sets are convex, for all acts $f$. Convexity jointly with the weak order property and continuity imply that the \emph{strict} upper contour set, denoted by $\mathcal{U}_{\succeq}^{(0)}(f)\coloneq \{g\colon g \succ f\}$, is also convex. We refer to the space of convex preferences by $\mathcal{C}\subset \mathcal{P}$. We interpret convexity in our model as a basic property of risk, or uncertainty aversion~\citep{yaari1969some}.

Many well-known models in the theory of choice under uncertainty are special cases of convex preferences. Examples are risk-averse subjective expected utility (SEU), max-min expected utility \citep[MEU: see][]{gilboa1989maxmin}, multiplier preferences~\citep{hansen2001robust}, variational preferences~\citep{maccheroni2006ambiguity}, and smooth ambiguity aversion~\citep{klibanoff2005smooth}.

An approximate, or robust, notion of upper contour sets is key to our results.
\begin{definition}[$\ep$-upper contour set]
The $\ep$-approximate upper contour set of the preference  $\succeq$ at the act $f$ is defined by 
\begin{equation*}
    \mathcal{U}_{\succeq}^{(\ep)}(f) = \left\{g \in \BR^d\colon (1-\ep)g \succ f\right\}\,.
\end{equation*}
\end{definition}
When $\succeq$ is continuous and convex, the approximate upper contour set $\mathcal{U}_{\succeq}^{(\ep)}(f)$ is convex and open. Convexity is often of particular importance to us, as some of our arguments rely on a hyperplane separation theorem.

Approximate optimality is often expressed by means of the $\ep$-maximization of some numerical objective function---a \df{utility function} $u$ representing a preference $\succeq$. Observe that when $\succeq$ is homothetic (which is the case for many preferences used in applications), then $u$ may be taken to be homogeneous of degree one. As a consequence, our notion of approximate optimality translates directly into an approximation of utilities. Indeed, we may then write $\mathcal{U}_{\succeq}^{(\ep)}(f) = \left\{g \in \BR^d\colon (1-\ep)u(g) >   u(f)\right\}$. Then an act $g$ is in $\mathcal{U}_{\succeq}^{(\ep)}(f)$ if and only if it represents an $\ep/(1-\ep)$\% utility improvement over $f$.

\subsection{Exchange Economies} \label{sec:exchangeEconomies}
The set of agents is denoted by $I$ and a typical member of $I$ is referenced by index $i$. We assume throughout that $I$ is finite.

An \df{exchange economy} is a mapping $\EE\colon I \to \mathcal{P} \times \BR_+^d$, where $\mathcal{E}(i) = (\succeq_i, \omega_i)$. Each agent $i\in I$ is described by a preference relation  $\succeq_i$ on $\BR^d$, as well as an \df{endowment vector} $\w_i\in\BR_+^d$.  In an exchange economy, we use $\mathcal{U}_i^{(\ep)}$ to denote the upper contour set $\mathcal{U}_{\succeq_i}^{(\ep)}$.

Given an exchange economy $\EE$, the \df{aggregate endowment} is  $\omega \coloneq \sum_{i \in I}\omega_i$. A profile of acts across agents, say $f = (f_i)_{i\in I} \in \BR^{d \times I}$, is called an \df{allocation} if $\sum_{i\in I}f_i = \omega$. The space of all allocations is denoted by $\mathcal{F}_\omega$. We go back and forth between the textbook ``allocation'' terminology, and ``risk-sharing agreement,'' given our interpretation of consumption as a state-contingent monetary act.\footnote{In the textbook Walrasian setting, $\omega(s)$ represents the total available amount of good $s$ in the economy, and in an Arrow-Debreu model it represents the total contingent amount that agents can receive in state $s$. See Chapter~7 in~\cite{debreu1959theory}, or Chapter~19.C in~\cite{mas1995microeconomic}.}

\begin{definition}[$\ep$-Pareto optimality]
\label{def:approx_pareto_opt}
An allocation $f \in \mathcal{F}_\omega$ is called $\ep$-Pareto optimal if there is no allocation $g \in \mathcal{F}_\omega$ such that $g_i\in \mathcal{U}_i^{(\ep)}(f_i)$ for all $i \in I$. 
\end{definition} 
In words, an allocation $f$ is $\ep$-Pareto optimal if there is no redistribution of the aggregate endowment $\w=\sum_i f_i$ that would be strictly better for all agents, and that would remain strictly better for all agents after a fraction $\ep$ of consumption is ``shaved off'' in each state of the world. When $\ep = 0$, the definition coincides with the usual notion of \emph{weak Pareto optimality}. 
When we insist on the risk-sharing interpretation of our model, Pareto optimality means an ex-ante efficient risk-sharing agreement.

\begin{definition}[Walrasian equilibrium] An allocation $f=(f_i)_{i\in I}$ is called a Walrasian equilibrium allocation for the
exchange economy $\mathcal{E}$, if there exists a price vector $p \in \BR^d$ such
that $g_i \succ_i f_i$ implies that $p \,\cdot\, g_i > p \,\cdot\, \omega_i$,
and $p\,\cdot\, f_i = p\,\cdot\, \w_i$, for every $i \in I$.
\end{definition}

An exchange economy $\EE$ is \df{convex} if each preference relation $\succeq_i$ is convex, i.e., $\EE\colon I \to \mathcal{C}\times \BR_+^d$. Convexity is required for the basic theory of general equilibrium: existence of Walrasian equilibrium, as well as the second welfare theorem, relies on convex preferences.  Our Theorems~\ref{thm:scitovsky} and~\ref{thm:small_belief_set} will also rely on the convexity of agents' preferences (Theorem~\ref{thm:walras_equil} however, makes no such assumption, but it is a statement about Walrasian equilibrium allocations).

When $s\mapsto \sum_{i \in I}\omega_i(s)$ is constant, we say that $\EE$ exhibits \df{no aggregate uncertainty}. The aggregate endowment is then the same across all states of the world, i.e., $\omega = (\bar \omega, \ldots, \bar \omega)$. The assumption of no aggregate uncertainty is common in the study of risk sharing: see, for example,~\citet*[Ch.\ 19.C]{mas1995microeconomic},~\cite{allen1988optimal}  or, more closely related to our paper,~\cite*{Billot2000}. Our Theorem~\ref{thm:scitovsky} requires that there be no aggregate uncertainty.


\section{Main Results}\label{sec:main}
We proceed to state our main results. A broader discussion of our findings can be found in the introduction. We discuss more specific questions of interpretation in Section~\ref{sec:discussion}. The proofs of our results, including an overview of the methodology behind the proofs, can be found in Section~\ref{sec:proofs}.

\subsection{Walrasian Equilibrium}\label{sec:walras}

Our first result entails the following exercise: Fix a Walrasian equilibrium allocation, and draw a random perturbation from a ball of radius $r$ distributed according to some probability measure $\BP^r_\kappa \in \mathcal{M}^r_\kappa$.\footnote{Obviously the interesting measures $\BP^r_\kappa \in \mathcal{M}^r_\kappa$ involve $\kappa$ that is independent of $d$, which is the case for the uniform and conditional Gaussian measures. See our remarks in Section~\ref{sec:notationconventions}.} Consider the event that this random perturbation, when applied to \textit{any} agent's equilibrium consumption, results in a contingent consumption that is better by at least $\ep$ than their equilibrium consumption. The resulting event, that at least one agent is made better off, has a vanishingly small probability as $d$ increases.

\begin{theorem}\label{thm:walras_equil}
Let $\EE\colon I \to \mathcal{P} \times \BR_+^d$ be an exchange economy, $\omega = \left(\omega_i\right)_{i \in I}$ be the endowment profile, and $f = \left(f_i\right)_{i\in I}$ be a Walrasian equilibrium allocation. Assume that there exists $\tau >0$ such that $\omega_i \geq \tau \bm{1}$ and $f_i \geq \tau \bm{1}$ for all $i \in I$. Fix $r >0$ and let $z \sim \BP^r_\kappa \in \mathcal{M}^r_\kappa$. Then for every $\ep>0$,
\begin{equation}
\label{eq:walras_equil_bound}
\BP^r_\kappa\left((1-\ep)(f_i + z) \succ_i f_i \text{ for some } i \in I\right) \leq \kappa \,\ee^{-\ep^2 \tau^2 d/ 8r^2}\,.
\end{equation}
\end{theorem}

The proof of this theorem is in Section~\ref{sec:proofs}, as is the proof of the rest of the results we provide in this section. The proof relies on results from the literature on isoperimetric inequalities and the concentration of measure phenomenon, but the proof section attempts to be self-contained.\footnote{One simple intuition highlights the connection with the concentration of measures: Consider an economy with a single risk-neutral agent, and assume that all states are equally likely; so the agent's preferences are represented by $u(f)=\sum_s f(s)$. Then an $\ep$-improvement over a constant act $f(s)=1$ (which would be an equilibrium allocation in a Robinson Crusoe economy) requires that $(1-\ep)\sum_s z(s)>\ep d$, or that $\frac{1}{d}\sum_s z(s)> \ep/(1-\ep)$. In high dimensions, {\em despite the lack of independence} ($z$ is drawn from a sphere), the random variable $\frac{1}{d}\sum_s z(s)$ is tightly concentrated around $0$.}

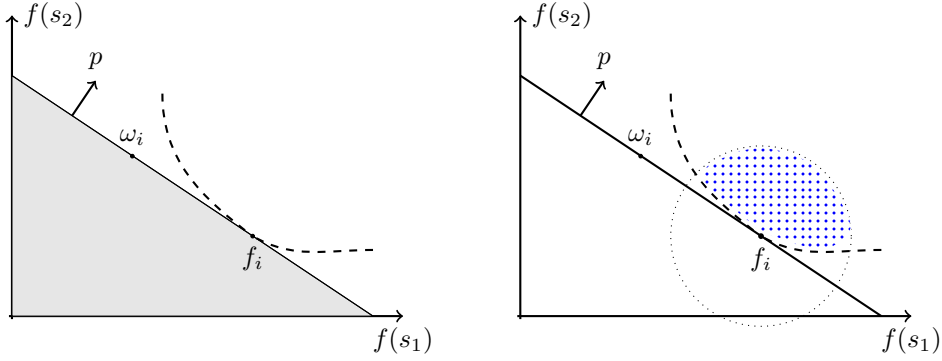
\begin{figure}[ttb]\centering
\begin{tikzpicture}[scale=.8]
  \draw[->, thick] (-.05,0) -- (6.5,0) node[below] {\footnotesize $f(s_1)$};
  \draw[->, thick] (0,-.05) -- (0,5) node[right] {\footnotesize $f(s_2)$};

  \draw[fill=gray!20] (0,0) -- (0,4) -- (6,0) -- cycle;
   \draw (0,4) -- (6,0);
   \coordinate (ef) at (4,1.33);
    \coordinate (endow) at (2,2.66);
    \draw[fill] (ef) circle (.03) node[below] {\footnotesize $f_i$};   
    \draw[fill] (endow) circle (.03) node[above] {\footnotesize $\w_i$};
    \draw[thick,->] (1,3.3333) -- ++ (90-34:.7) node[above] {\footnotesize $p$};

  \draw[dashed, thick] (2.5,3.7) to [out=-90,in=143] (ef);
  \draw[dashed, thick] (6,1.1) to [out=180,in=-30] (ef);

  \end{tikzpicture} \qquad
\begin{tikzpicture}[scale=.8]
  \draw[->, thick] (-.05,0) -- (6.5,0) node[below] {\footnotesize $f(s_1)$};
  \draw[->, thick] (0,-.05) -- (0,5) node[right] {\footnotesize $f(s_2)$};

   \draw [thick] (0,4) -- (6,0); 
  \coordinate (ef) at (4,1.33);
    \coordinate (endow) at (2,2.66);
    \draw[fill, thick] (ef) circle (.03) node[below] {\footnotesize $f_i$};   
    \draw[fill] (endow) circle (.03) node[above] {\footnotesize $\w_i$};
    \draw[thick,->] (1,3.3333) -- ++ (90-34:.7) node[above] {\footnotesize $p$};

  \draw[dashed, thick] (2.5,3.7) to [out=-90,in=143] (ef);
  \draw[dashed, thick] (6,1.1) to [out=180,in=-30] (ef);

\draw[dotted] (ef) circle (1.5);

\draw[white,pattern=dots,pattern color=blue] plot [smooth]
coordinates {(4.01,1.4) (5,1.1) (5.5,1.5) (5,2.4) (4.5,2.7) (4,2.8) (3.5,2.7) (3,2.3)
  (4.01,1.4)};

  \end{tikzpicture} 
    \caption{Individual welfare.}
    \label{fig:walrasian}
\end{figure}

To unpack the statement of Theorem~\ref{thm:walras_equil}, imagine an economy with two consumers, $I=\{1,2\}$. The situation of an individual agent $i$ is illustrated by Figure~\ref{fig:walrasian}, which shows the usual consumption-maximization diagram. In the figure, there are two states of the world ($d=2$), and an agent's budget set is depicted in gray. The budget is defined by an equilibrium price $p$, and by the agent's endowment $\w_i$. The picture shows the agent's equilibrium consumption $f_i$. The agent's indifference curve, tangent to the budget line at $f_i$, is depicted as a dashed curve. The upper contour set is the region to the northeast of the indifference curve that contains all consumptions that are preferred to $f_i$. If a random perturbation $z\in \mathbb{B}(r)$ is in $(\mathcal{U}_1^{(\ep)}(f_1) - f_1\})\cup (\mathcal{U}_2^{(\ep)}(f_2) - \{f_2\})$, then it means that either $(1-\ep)(z+f_1)$ provides a strict welfare improvement for agent 1 over $f_1$, or that $(1-\ep)(z+f_2)$ does this for agent 2 over $f_2$ (or that both things are true).

The consumer would be better off with more consumption in every state, but they would also be willing to accept a tradeoff of more consumption in one state than in another. Of course, the tradeoffs would have to be at terms of trade that are more favorable than equilibrium prices, and guided by the curvature of the agent's utility function (and its implied level of risk aversion). But the figure shows that there is a significant probability that a perturbation would leave the agent strictly better off, and the upper contour set contains much more than the bundles that are larger state-by-state. So does the dotted region. The theorem, actually, deals with the event that $z$ is in $(\mathcal{U}_1^{(\ep)}(f_1) - f_1\})\cup (\mathcal{U}_2^{(\ep)}(f_2) - \{f_2\})$, a potentially non-convex set that is usually strictly larger than the dotted region (and even larger when there are more than two agents).

The implications of the theorem may, therefore, be surprising for reasons that are familiar from our discussion in the introduction. The perturbations $z$ with $(1-\ep)(f_i + z) \succ_i f_i$ do not all feature $z>0$. In fact, the figure, as well as basic economic intuition, suggests that the curvature of the agents' utility functions should matter in calculating the probability in Theorem~\ref{thm:walras_equil}. So it is surprising that, for large $d$, the \emph{curvature, or risk attitude, does not matter} and we get an exponentially decreasing probability of a welfare improvement; the same as we would get if we were dealing with extremely risk-averse agents, and asking for a perturbation that is positive in all dimensions.

To put the bound in the theorem in context, consider an example with $\tau=r$, a uniform probability measure, and a 11\% welfare improvement ($\ep=0.1$). Then the probability of making at least one agent better off is at most $e^{-d/800}$. Financial economists often take the state space to equal the returns to a set of assets that are traded in the market: Hence, if we take $d$ to be the number of stocks, say, trading on the NASDAQ Exchange, then our bound from Theorem~\ref{thm:walras_equil} is about~0.67\%.\footnote{The number of stocks listed on NASDAQ is, according to \url{https://www.nasdaq.com/}, at least 4000.} 

\clearpage
\begin{remark}
We can express Theorem~\ref{thm:walras_equil} in terms of volumes. Specifically, when $z$ is drawn from the uniform measure $\BP^r$, the bound in~\eqref{eq:walras_equil_bound} is equivalent to
\begin{equation}
\label{eq:walras_equil_bound_vol}
    \frac{\mathsf{Vol}\left(\bigcup_{i \in I} \left(\mathcal{U}_i^{(\ep)}(f_i) - \{f_i\}\right) \, \cap \,  \mathbb{B}(r)\right)}{\mathsf{Vol}\left( \mathbb{B}(r)\right)} \leq \ee^{-\ep^2 \tau^2 d/ 8r^2}\,.
\end{equation}    
\end{remark}

Finally, we should emphasize that Theorem~\ref{thm:walras_equil} bounds the probability of an improvement to \emph{at least one} agent, while the bound is independent of the number of agents. We may then consider an economy with many agents and many states, and conclude that it is very unlikely that a given perturbation could make even one agent better off by at least $\ep$. 

Now, if we do focus on a particular agent, we immediately obtain:
\begin{corollary}\label{cor:walras_equil_bound}
Fix $r>0$ and let $z \sim \BP^r_\kappa \in \mathcal{M}^r_\kappa$. Under the hypotheses of Theorem~\ref{thm:walras_equil}, for every $\ep>0$, and for each $i \in I$,
\begin{equation}
\label{eq:walras_equil_bound2}
\BP^r_\kappa\left((1-\ep)(f_i+z) \succ_i f_i\right) \leq \kappa \, \ee^{-\ep^2 \tau^2 d/ 8r^2}\,.
\end{equation}
\end{corollary}    

Observe that the message of Theorem~\ref{thm:walras_equil} remains the same if we restrict attention to perturbations $z\in \mathbb{B}(f_i, r)$ for which $p\cdot z> p\cdot \w_i$. In a Walrasian equilibrium, the set $\bigcup_{i\in I}\left(\mathcal{U}_i^{(\ep)}(f_i) - f_i\right)$ lies in the half-space $\{z:p\cdot z\geq 0\}$, as $p\cdot f_i=p\cdot \w_i$, and a consumption that is affordable cannot provide an improvement in utility over $f_i$. So it seems natural to consider only $f_i+z$ that costs more than $i$'s income at the equilibrium prices. This subset of $\mathbb{B}(r)$ contains half its volume, and thus the message of Theorem~\ref{thm:walras_equil} and Corollary~\ref{cor:walras_equil_bound} remains unchanged. 

\subsection{Optimality and Resource Utilization}
\label{sec:pareto}
Our second result considers collective welfare improvements, and concerns outcomes that may not be Walrasian equilibria. An equilibrium allocation must be Pareto optimal, but the Pareto set is strictly larger. We ask about the possibility of a collective improvement after a shock in aggregate consumption, starting from an allocation that is a weak Pareto optimum.

In our next theorem, when we assume that there is no aggregate uncertainty, we normalize the aggregate endowment to be $\omega = (1,\ldots,1) = \bm{1}$.

\clearpage
\begin{theorem}\label{thm:scitovsky}
Let $\EE\colon I\to \mathcal{C}\times \BR_+^d$ be a convex exchange economy with no aggregate uncertainty, and an aggregate endowment of $\omega = \bm{1}$. Assume that $f$ is a weakly Pareto optimal allocation in $\EE$. For any $\ep>0$, let $\mathcal{V}^{(\ep)}(f) \coloneq  \sum_{i\in I} \mathcal{U}_i^{(\ep)}(f_i)$ be the Minkowski sum of the $\ep$-upper contour sets. Fix $r>0$ and let $z \sim \BP^r_\kappa \in \mathcal{M}^r_\kappa$. Then 
\begin{equation}
\label{eq:scitovsky}
\BP^r_\kappa\left(\omega  + z \in \mathcal{V}^{(\ep)}(f)\right) \leq \kappa \, \ee^{-\ep^2 d/ 8r^2}\,.
\end{equation}
\end{theorem}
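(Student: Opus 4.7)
The plan is to combine the hyperplane separation theorem with a concentration bound for the uniform distribution on a Euclidean ball. The geometry of weak Pareto optimality, together with the $\ve$-padding built into $\mathcal{V}^{(\ve)}$, will translate into an $\ve$-sized margin between $\omega$ and $\mathcal{V}^{(\ve)}$ along a carefully chosen direction; concentration of measure will then control the probability that a small uniform perturbation of $\omega$ crosses this margin. Throughout I take $\ve \in (0,1)$, the only nontrivial case.

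First I would record three structural properties of $\mathcal{V}^{(\ve)}$. Each $\mathcal{U}_i^{(\ve)}(f_i)$ is convex, open, and upward-monotone (closed under coordinatewise increases) by the convexity, continuity, and monotonicity of $\succeq_i$, so the Minkowski sum $\mathcal{V}^{(\ve)}$ inherits these properties. The definition immediately yields the rescaling identity $\mathcal{V}^{(\ve)} = (1-\ve)^{-1}\,\mathcal{V}^{(0)}$. Weak Pareto optimality of $f$ forces $\omega = \sum_i f_i \notin \mathcal{V}^{(0)}$, since $\omega \in \mathcal{V}^{(0)}$ would express $\omega$ as $\sum_i g_i$ with each $g_i \succ_i f_i$, contradicting weak Pareto optimality. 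The supporting hyperplane theorem then provides a unit vector $p \in \BR^d$ with $p \cdot \omega \leq p \cdot y$ for every $y \in \mathcal{V}^{(0)}$, and upward-monotonicity of $\mathcal{V}^{(0)}$ forces $p \geq 0$ coordinatewise (otherwise one could drive $p \cdot y$ to $-\infty$ by adding multiples of $\mathbf{e}_s$ for $p_s < 0$).

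Next I would convert the $\ve$-padding into a quantitative margin along $p$. For any $x \in \mathcal{V}^{(\ve)}$, the rescaling identity gives $(1-\ve)x \in \mathcal{V}^{(0)}$, so $p \cdot x \geq (1-\ve)^{-1} p \cdot \omega$; subtracting $p \cdot \omega$ and using $\omega = \bm{1}$ yields
\[
p \cdot (x - \omega) \;\geq\; \frac{\ve}{1-\ve}\, p \cdot \bm{1} \;\geq\; \ve,
\]
where the last step follows from $p \geq 0$ and $\|p\|_2 = 1$, which together imply $p \cdot \bm{1} = \sum_s p_s \geq \bigl(\sum_s p_s^2\bigr)^{1/2} = 1$. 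Consequently the event $\{\omega + z \in \mathcal{V}^{(\ve)}\}$ is contained in the half-space $\{z : p \cdot z \geq \ve\}$, so
\[
\BP^r\bigl(\omega + z \in \mathcal{V}^{(\ve)}\bigr) \;\leq\; \BP^r(p \cdot z \geq \ve).
\]
The proof concludes by invoking the standard concentration bound for a linear functional of the uniform law on $\mathbb{B}(r) \subset \BR^d$: if $p$ is a unit vector and $z \sim \BP^r$, then $\BP^r(p \cdot z \geq t) \leq \exp(-d t^2/(8 r^2))$; setting $t = \ve$ delivers the claim.

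The main obstacle is not the geometric setup, which is a direct application of hyperplane separation combined with the definition of $\mathcal{V}^{(\ve)}$, but rather securing the one-dimensional concentration inequality for the uniform measure on the ball with the specific constant $1/8$. This is the isoperimetric input alluded to in the introduction, and should already be available (or established) as a lemma in Section~\ref{sec:proofs}, since the very same exponent appears in Theorem~\ref{thm:walras_equil}; once it is in hand, the remainder of the argument is a few lines of convex analysis.
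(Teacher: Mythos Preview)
Your argument is correct and essentially identical to the paper's: separate $\omega=\bm{1}$ from the convex set $\mathcal{V}^{(0)}$ by a nonnegative $p$, use $(1-\ve)\mathcal{V}^{(\ve)}\subseteq\mathcal{V}^{(0)}$ together with $\|p\|_1\geq\|p\|_2$ to obtain the $\ve$-margin, and finish with the Brunn--Minkowski-based concentration bound (Lemma~\ref{lem:high_dim_sep}). The only cosmetic difference is that the paper applies the lemma directly to the pair $\mathcal{V}^{(\ve)}\cap\mathbb{B}(\bm{1},r)$ and $H^-(p;\|p\|_1)\cap\mathbb{B}(\bm{1},r)$, whereas you phrase the last step as the half-space tail bound $\BP^r(p\cdot z\geq\ve)$---which is precisely Lemma~\ref{lem:high_dim_sep} specialized to a spherical cap and the opposite half-ball.
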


The set $\mathcal{V}^{(0)}(f)$ of all aggregate bundles that may be disaggregated to make all agents strictly better off is called the \df{Scitovsky contour}; it is a basic notion in welfare economics (see \cite{de1942reconsideration} and \cite{samuelson1956social}), and plays a key role in the proof of the second welfare theorem. Here we focus on a version that requires $\ep$-welfare improvements.

\begin{remark}
Similar to Theorem~\ref{thm:walras_equil}, we can offer a version of the above result in terms of volume. Specifically, under the uniform measure $\BP^r$, the bound in~\eqref{eq:scitovsky} is equivalent to
\begin{equation}
\label{eq:scitovsky_vol}
   \frac{\mathsf{Vol}\left(\mathcal{V}^{(\ep)} \,\cap\, \mathbb{B}(\w, r)\right)}{\mathsf{Vol}(\mathbb{B}(\w, r))} \leq \ee^{-\ep^2 d/ 8r^2}\,.
\end{equation}
\end{remark}

It is also worth emphasizing that, just like in Theorem~\ref{thm:walras_equil}, the basic message of Theorem~\ref{thm:scitovsky} holds, even if we restrict attention to perturbations that result in a more expensive aggregate bundle. So, if $p$ is the supporting price vector obtained from the second welfare theorem, we can restrict to perturbations $z$ with $p\cdot z>0$. These are the perturbations that, according to neoclassical cost-benefit analysis (or the Kaldor criterion, see~\cite{graaff1967theoretical}), have some chance of being Pareto improving. The reason is the same as that previously discussed for Theorem~\ref{thm:walras_equil}.

A version of Theorem~\ref{thm:scitovsky} holds for allocations that are \emph{not} Pareto optimal. We make use of the measure of inefficiency introduced by~\cite{debreu1951coefficient}, the  \textit{coefficient of resource utilization}. To this end, consider an allocation $f$ in an exchange economy $\EE$ that is \emph{not} weakly Pareto optimal. This means that there is an alternative allocation of the aggregate endowment in $\EE$ that makes all agents strictly better off. In terms of the Scitovsky contour, this means that the aggregate endowment $\w=\sum_i\w_i$ lies in the set $\mathcal{V}^{(0)}$.

Debreu considers a minimal amount of aggregate resources (call it $\w^*$) that could be used to provide agents with the same utility as in $f$, and thinks of the gap between $\w$ and $\w^*$ as the inefficiency inherent in the allocation $f$. In Debreu's words, these are the ``non-utilized resources.'' He proposes to measure this gap by means of a ``distance with economic meaning:'' $p\cdot (\w-\w^*)$, where $p$ is an ``intrinsic price vector'' associated with $\w^*$. To obtain a measure that is, in a sense, scale independent, he actually works with the ratio of $p\cdot \w^*$ to $p\cdot \w$. Prices $p$ follow from an argument that is analogous to the second welfare theorem; in particular, they are not uniquely defined. 

Debreu's \df{coefficient of resource utilization} for an allocation $f=(f_i)_{i \in I}$ is defined as:
\[\mathrm{CRU}(f)
\coloneq \max_{\w^*\in \partial \overline{\mathcal V^{(0)}}} \frac{p(\w^*)\cdot \w^*}{p(\w^*)\cdot \w},
\] where $\partial \overline{\mathcal V^{(0)}}$ consists of the minimal elements of the
closure $\overline{\mathcal V^{(0)}}$ of $\mathcal V^{(0)}$ (meaning there is no smaller element in $\mathcal{V}^{(0)}$), and $p(\w^*)$ is a supporting
price vector at $\w^*$, what Debreu calls an intrinsic price vector. \cite{debreu1951coefficient} shows that $\mathrm{CRU}(f)$ is well defined (in particular, that it does not depend on the selection of prices $p(\w^*)$), that it is a number in $(0,1]$, and that $\mathrm{CRU}(f)<1$ when $f$ is Pareto dominated.

Then we obtain, following a similar argument as in Theorem~\ref{thm:scitovsky}: 

\begin{proposition}
\label{prop:CRU}
Let $\EE\colon I\to \mathcal{C}\times \BR_+^d$ be an exchange economy under the hypotheses of Theorem~\ref{thm:scitovsky}. Fix $r>0$ and let $z \sim \BP^r_\kappa$. If $f$ is not weakly Pareto optimal, and $\beta\coloneq\mathrm{CRU}(f)$ is its coefficient of resource utilization, then 
\begin{equation}
\label{eq:CRU}
\BP^r_\kappa\left(\omega + z \in \mathcal{V}^{(1-\beta^2)}(f)\right) \leq \ee^{-\left(\frac{1-\beta}{\beta}\right)^2 d/8r^2}\,.
\end{equation}   
\end{proposition}

Proposition~\ref{prop:CRU} quantifies the meaning of the coefficient of resource utilization. Debreu writes that one may think of $\mathrm{CRU}(f)$ as a percentage of national income, or GDP. But in an economy with a large state space, even a seemingly large amount of inefficiency---as measured by the coefficient of resource utilization---may not translate into a wide scope for welfare improvements by changing aggregate consumption. To use a similar numerical example from before, consider the case where $d$ is calculated from the number of stocks listed on NASDAQ. Suppose an inefficiency of 10\%, meaning a CRU of 0.9. Then the welfare improvement in Proposition~\ref{prop:CRU} is about 19\% and the bound on the probability of such an improvement is about 0.21\%. 

\subsection{Prior Beliefs and Welfare-Improving Trade}\label{sec:beliefs}

Our third result concerns agents with multiple priors, and the size of the sets of prior beliefs that they may possess. In previous sections, we quantified the upper contour sets of individual agents and their sums. Here we instead follow~\cite{yaari1969some} to interpret a vector that supports the upper contour sets, at some contingent consumption, as a prior belief. We depend on a relaxation of the equivalence between the existence of a common prior and Pareto efficiency of an allocation. This equivalence has been studied in a number of previous works \citep[e.g.,][]{Billot2000, ng2003duality, Rigotti_2008_ECMA,gilboasamuelsonschmeidler,GHIRARDATO2018730}. We shall follow~\cite{Rigotti_2008_ECMA} quite closely here.

We consider an exchange economy $\EE$ with no aggregate uncertainty. Importantly, from this point on we will not require convexity of preferences. The aggregate endowment is the same across all states of the world, i.e., $\omega = (\bar \omega, \ldots, \bar \omega)$. Also following the convention of~\cite{Rigotti_2008_ECMA}, in this section, we restrict the allocation space to nonnegative vectors. In this sense, we denote the unrestricted allocation space (defined in Section~\ref{sec:exchangeEconomies}) by $\mathcal{F}_{\bar \omega}$, and its nonnegative subset by $\mathcal{F}^+_{\bar \omega}\coloneq \mathcal{F}_{\bar \omega} \cap \BR_+^d$. We quantify its ``size'' by
\begin{equation}
\label{eq:rho}
    \rho\coloneq  2 \bar\omega^{-1} \max_{f \in \mathcal{F}^+_{\bar \omega}} \sum_{i \in I}\norm{f_i}\,.
\end{equation}

\cite{yaari1969some} defines a \emph{subjective belief} as a probability distribution that supports an agent's upper contour set at an act. When the upper contour has a kink (due, for example, to the ambiguity in preferences and its induced lack of differentiability), there will be multiple supporting vectors, and thus multiple beliefs (or multiple priors). In the spirit of~\cite{Rigotti_2008_ECMA}, we define the \emph{subjective belief set} as the set of all supporting vectors. 

Specifically, let $f$ be an act in $\BR_+^d$. The upper contour set of agent $i$ is $\{g\colon g \succeq_i f\}$, and the subjective belief set at $f$ is defined by
\begin{equation*}
    \mathcal{B}_i(f) = \left\{ \mu \in \Delta S\colon \mu \cdot g \geq \mu \cdot f \text{ for all } g \succeq_i f\right\}\,.
\end{equation*}
The set $\mathcal{B}_i(f)$ is a convex and closed (hence compact) subset of the $(d-1)$-dimensional probability simplex $\Delta S$ (or $\Delta_d$). One may also interpret a vector in $\mathcal{B}_i(f)$ as the vector of \emph{normalized prices} that supports the consumption of the act $f$.

We define the $\delta$-extension of the subjective belief set by 
\begin{equation*}
\mathcal{B}_i(f_i)^\delta \coloneq  \{\nu\in \Delta S\colon \inf_{\mu\in \mathcal{B}_i(f_i)} \norm{\nu-\mu}< \delta\}\,.
\end{equation*}

Our first result adapts the characterization of Pareto optimality in~\cite{Rigotti_2008_ECMA} to our setting, allowing for approximate Pareto domination.

\begin{proposition}\label{prop:approx_common_prior}
Let $\EE$ be an exchange economy with preferences $\succeq_i \in \mathcal{P}$ for all $i\in I$ and no aggregate uncertainty. Set $\delta = \ep/\rho$. If the allocation $f$ is $\ep$-Pareto dominated, then $\bigcap_{i \in I} \mathcal{B}_i(f_i)^\delta = \emptyset$.
\end{proposition}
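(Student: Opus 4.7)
The plan is a proof by contraposition: assume there exists $\nu \in \bigcap_{i \in I} \mathcal{B}_i(f_i)^\delta$ and derive a contradiction with $f$ being $\ve$-Pareto dominated. By the definition of the $\delta$-extension, for each $i \in I$ there is a supporting vector $\mu_i \in \mathcal{B}_i(f_i)$ with $\norm{\nu - \mu_i} < \delta$. Let $g \in \mathcal{F}_\omega$ witness the $\ve$-Pareto domination, so $(1-\ve)g_i \succ_i f_i$ for every $i$.

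The heart of the argument is to turn the near-coincidence of $\nu$ with each $\mu_i$ into an approximate common-support property. Since $(1-\ve)g_i \succeq_i f_i$, the support inequality gives $\mu_i \cdot ((1-\ve)g_i - f_i) \geq 0$. Decomposing $\nu = \mu_i + (\nu - \mu_i)$ and applying Cauchy--Schwarz yields
\[
\nu \cdot ((1-\ve)g_i - f_i) \;\geq\; -\norm{\nu - \mu_i}\,\norm{(1-\ve)g_i - f_i} \;>\; -\delta\,\norm{(1-\ve)g_i - f_i},
\]
where the last strict inequality uses $\norm{\nu - \mu_i} < \delta$ together with $\norm{(1-\ve)g_i - f_i} > 0$; the latter holds because $(1-\ve)g_i = f_i$ would contradict the strict preference $(1-\ve)g_i \succ_i f_i$.

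Summing over $i \in I$, the left-hand side becomes $\nu \cdot \sum_i ((1-\ve)g_i - f_i) = -\ve\, \nu \cdot \omega = -\ve\, \bar\omega$, since $f, g \in \mathcal{F}_\omega$ and $\omega = \bar\omega\, \bm{1}$ with $\nu \cdot \bm{1} = 1$. For the right-hand side, the triangle inequality and the definition of $\rho$ in~\eqref{eq:rho} give $\sum_i \norm{(1-\ve)g_i - f_i} \leq \sum_i (\norm{g_i} + \norm{f_i}) \leq \bar\omega\, \rho$, using that both $f$ and $g$ lie in $\mathcal{F}_{\bar\omega}$. Combining, $-\ve\, \bar\omega > -\delta\, \bar\omega\, \rho$, i.e., $\delta > \ve/\rho$, which contradicts $\delta = \ve/\rho$.

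The main subtlety is keeping strict and weak inequalities perfectly aligned: the target estimate $\delta \geq \ve/\rho$ is tight at the boundary, so the contradiction only materializes by extracting strict slack both from the strict preference $(1-\ve)g_i \succ_i f_i$ (which rules out $(1-\ve)g_i = f_i$ and hence forces $\norm{(1-\ve)g_i - f_i} > 0$) and from the strict ``$<\delta$'' in the definition of the $\delta$-extension. Replacing either piece by a weak inequality would only yield $\delta \geq \ve/\rho$, which is not enough to conclude emptiness of the intersection.
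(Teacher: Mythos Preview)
Your argument is correct and follows essentially the same route as the paper: assume a common $\nu$ in the intersection of the $\delta$-extensions, pick nearby $\mu_i\in\mathcal{B}_i(f_i)$, apply Cauchy--Schwarz to bound $\nu\cdot((1-\ve)g_i-f_i)$, sum over $i$, and use the definition of $\rho$ to reach the contradiction $-\ve\bar\omega>-\ve\bar\omega$. One small point: what you describe as ``contraposition'' is really a proof by contradiction (you retain both hypotheses). The only substantive difference from the paper is where the strict inequality is harvested: the paper invokes continuity to get $\mu_i\cdot((1-\ve)g_i-f_i)>0$, whereas you use only the weak support inequality and extract strictness from $\norm{\nu-\mu_i}<\delta$ together with $\norm{(1-\ve)g_i-f_i}>0$; your bookkeeping here is arguably cleaner.
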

In Proposition~\ref{prop:approx_common_prior}, we show that if an allocation is not \emph{approximately Pareto optimal} (measured by the parameter $\ep$), and thus there is room for welfare-improving trade, then the $\delta$-extension of the subjective belief sets share no common prior.
\begin{remark}
In Proposition~\ref{prop:approx_common_prior}, the extension of the subjective belief sets and the definition of $\rho$ in~\eqref{eq:rho} are both with respect to the $\ell_2$ norm. However, one can readily generalize by choosing an arbitrary $p$-norm for the belief sets, its conjugate $q$-norm for the definition of $\rho$, and the proof follows analogously. 
\end{remark}

Leveraging Proposition~\ref{prop:approx_common_prior}, in the following theorem we examine the volume of the prior sets as the number of states $d$ grows large. Before we state the result, we need to introduce some further notations. For a subset $J \subseteq I$, denote its complement by $J^c$, and define $\mathcal{B}_J(f_J) \coloneq \bigcap_{j \in J} \mathcal{B}_j(f_j)$. We often drop the allocation $f$ from the argument of subjective belief sets, when it is understood from the context. We further assume that  preferences are strongly monotonic (i.e., belonging to $\mathcal{P}^{\text{sm}}$). This assumption ensures that the dimension of each $\mathcal{B}_i$ is the same as $\Delta_d$.

\begin{theorem}\label{thm:small_belief_set}
Let $\EE\colon I\to \mathcal{P}^{\text{sm}}\times \BR_+^d$ be an exchange economy with no aggregate uncertainty. If the allocation $f \in \mathcal{F}^+_{\bar \omega}$ is $\ep$-Pareto dominated, then there exists a constant $c >0$, such that for every proper subset $J \subset I$, 
\begin{equation}
\label{eq:small_belief_set}
    \frac{\min\left(\mathsf{Vol}\left(\mathcal{B}_J\right), \mathsf{Vol}\left(\mathcal{B}_{J^c}\right)\right)}{\mathsf{Vol}\left(\Delta_d\right)} \leq \dfrac{1}{2}\,\ee^{-c \ep \sqrt{d}}\,.
\end{equation}
Moreover, the constant $c$ is universal: its value is independent of the primitives of the economy, the dimension $d$ and the parameter $\ep$.
\end{theorem}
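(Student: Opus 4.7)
The plan is to combine Proposition~\ref{prop:approx_common_prior} with hyperplane separation and a sharp tail bound for linear functionals of the uniform measure on $\Delta_d$. At a high level, $\ve$-Pareto dominance forces $\mathcal{B}_J$ and $\mathcal{B}_{J^c}$ to sit across a slab of width $\Omega(\ve/\sqrt{d})$ in the simplex, and one-dimensional exponential concentration then shrinks one of the two volumes to $\ee^{-c\ve\sqrt{d}}$.

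First I would set $\delta = \ve/\rho$ and apply Proposition~\ref{prop:approx_common_prior} to get $\bigcap_{i\in I}\mathcal{B}_i(f_i)^\delta = \emptyset$. Since $\mathcal{B}_J\subseteq\mathcal{B}_j$ for each $j\in J$, monotonicity of the $\delta$-extension gives $\mathcal{B}_J^\delta\subseteq\bigcap_{j\in J}\mathcal{B}_j^\delta$, and likewise $\mathcal{B}_{J^c}^\delta\subseteq\bigcap_{k\in J^c}\mathcal{B}_k^\delta$. Consequently $\mathcal{B}_J^\delta\cap\mathcal{B}_{J^c}^\delta\subseteq\bigcap_{i\in I}\mathcal{B}_i^\delta=\emptyset$, and a midpoint-and-triangle-inequality argument (any $\mu_J\in\mathcal{B}_J$, $\mu_{J^c}\in\mathcal{B}_{J^c}$ with $\norm{\mu_J-\mu_{J^c}}<2\delta$ would have a midpoint in the forbidden intersection) upgrades this to $\mathsf{dist}(\mathcal{B}_J,\mathcal{B}_{J^c})\geq 2\delta$.

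Because both sets are convex and closed in $\Delta_d$, the hyperplane separation theorem supplies a unit vector $a\in\BR^d$ orthogonal to $\bm{1}$ and a scalar $\alpha$ with $\mathcal{B}_J\subseteq\{\nu:a\cdot\nu\geq\alpha\}$ and $\mathcal{B}_{J^c}\subseteq\{\nu:a\cdot\nu\leq\alpha-2\delta\}$. Letting $X$ be uniformly distributed on $\Delta_d$, it remains to bound $\min(\BP(a\cdot X\geq\alpha),\BP(a\cdot X\leq\alpha-2\delta))$. I would invoke the exponential concentration inequality for the uniform (Dirichlet $(1,\ldots,1)$) measure: $\BP(|a\cdot X - m(a)|\geq t)\leq C\ee^{-ctd}$ for universal constants $c,C>0$, where $m(a)$ denotes the median of $a\cdot X$. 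Since $m(a)$ can lie within distance $<\delta$ of at most one of the cutoffs $\alpha$ and $\alpha-2\delta$, one of the two halfspaces is a genuine tail event beyond the median by $\geq\delta$, giving a bound of $C\ee^{-c\delta d}$.

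The computation is finished by using $\sum_i\norm{f_i}\leq\bar\omega\sqrt{|I|d}$, a consequence of Cauchy-Schwarz applied to the identity $\sum_i f_i=\bar\omega\bm{1}$ combined with the componentwise bound $f_i\leq\bar\omega\bm{1}$; this gives $\rho=O(\sqrt{d})$ and hence $\delta d\geq c'\ve\sqrt{d}$, delivering the advertised $\tfrac12\ee^{-c\ve\sqrt{d}}$ bound. The main obstacle is the one-dimensional concentration in the previous step: while a single coordinate $X_i$ is Beta$(1,d-1)$ and enjoys the correct exponential decay at rate $d$, handling a general unit $a\perp\bm{1}$ uniformly in the direction requires either exploiting log-concavity of the Dirichlet density (e.g.\ through the Poincar\'e constant of order $d^2$) or a simplex-specific isoperimetric inequality. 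One must also carefully track constants through the separation step and the estimate for $\rho$ to ensure the final $c$ is genuinely universal, independent of the primitives of the economy, of $d$, and of $\ve$.
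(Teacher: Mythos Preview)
Your skeleton matches the paper's: Proposition~\ref{prop:approx_common_prior} gives $\mathcal{B}_J^\delta\cap\mathcal{B}_{J^c}^\delta=\emptyset$ with $\delta=\ve/\rho$, and a bound $\rho\lesssim\sqrt{d}$ converts the eventual $\ee^{-c\delta d}$ into $\ee^{-c\ve\sqrt{d}}$ (the paper leaves this conversion entirely implicit; your explicit treatment of it is a plus). The difference is in the concentration step. The paper does not pass through hyperplane separation at all: it cites an isoperimetric inequality of \cite{Barthe2009} for the uniform measure $u$ on $\Delta_d$, namely $\mathcal{I}_u(a)\ge c\,a\,d$, integrates it via Gr\"onwall to obtain $u(A^\delta)\ge 1-\tfrac12\ee^{-c\delta d}$ whenever $u(A)\ge\tfrac12$, and then $\mathcal{B}_J^\delta\cap\mathcal{B}_{J^c}^\delta=\emptyset$ immediately forces $\min\{u(\mathcal{B}_J),u(\mathcal{B}_{J^c})\}\le\tfrac12\ee^{-c\delta d}$, with no use of convexity of the belief sets. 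Your route (convex separation, then a one-dimensional tail bound for $a\cdot X$ on the simplex) is sound, but the ``main obstacle'' you flag---proving $\BP(|a\cdot X-m(a)|\ge t)\le C\ee^{-ctd}$ uniformly in unit $a\perp\bm{1}$---is precisely the half-space instance of Barthe's isoperimetry. If you close it by invoking a simplex isoperimetric inequality you have reproduced the paper's argument with an unnecessary separation layer on top; if you want a genuinely more elementary proof (direct Dirichlet tails, or a Poincar\'e constant plus Gromov--Milman), that remains to be carried out, and obtaining rate $d$ rather than $\sqrt{d}$ uniformly over directions is exactly the hard part.
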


The message of the theorem is clearest in a two-agent economy (for example the setting of~\cite{kajii2006agreeable} with MEU preferences). The existence of an $\ep$-Pareto improving trade is then possible \textit{only} when one of the two agents has a small set of subjective beliefs, as measured through its Euclidean volume. We expand on this interpretation in Section~\ref{sec:behavioralimplications}. 

More generally, in an economy with many agents, Theorem~\ref{thm:small_belief_set} states that, regardless of how we partition the agents into two groups, at least one group must have a small set of priors—meaning the volume of the intersection of their prior sets is small. If we think of the intersection as corresponding to a representative agent for the group (a notion that may be formalized), then this agent must be close to being ambiguity neutral: see again the discussion in Section~\ref{sec:behavioralimplications} for a sufficient condition under which the small volume conclusion leads to a behavioral property that bounds the degree of ambiguity aversion.

\section{Discussion}\label{sec:discussion}

\subsection{Behavioral Implications of Theorem~\ref{thm:small_belief_set}}\label{sec:behavioralimplications}

In this section we study the behavioral implications of Theorem~\ref{thm:small_belief_set}. For the purposes of this discussion, we consider a setting with two agents. The message of our theorem is that, if the current risk-sharing arrangement between the two agents is strongly Pareto dominated (namely if there exists a possibility of a strongly welfare improving ex-ante trade), then the volume of the underlying belief set of at least one agent must be very small. The \emph{smaller} this volume is, the \emph{closer} that agent is to being ambiguity neutral---a connection we seek to quantify in this section.\footnote{For an overview of ambiguity and ambiguity aversion, see~\cite{machina2014ambiguity} and~\cite{gilboa2016ambiguity}.}

It is well known that ambiguity aversion leads to less trade (see, for example,~\cite{dow1992uncertainty}). Theorem~\ref{thm:small_belief_set} offers a quantitative expression of this fact: that the possibility of an $\ep$-Pareto improving trade \textit{necessitates} small ambiguity aversion in high dimensions.

Our interpretation of the size of a set of priors is in line with the comparative notion of ambiguity aversion proposed by~\cite{Ghirardato_2004_JET}. Specifically, they show in the max-min expected utility setting of~\cite{gilboa1989maxmin}, that the ambiguity aversion of a decision maker decreases as her multiple prior set shrinks with respect to the set inclusion order. 

To establish the connection between ambiguity aversion and the volume of the decision maker's set of beliefs, or priors, we again appeal to the max-min expected utility setting of~\cite{gilboa1989maxmin}. In their model, there is a convex and compact set of priors $\Pi \subseteq \Delta S$, and the agent's choices over acts are guided by a utility function of the form \[
u(f) = \min_{\mu \in \Pi} f \,\cdot \, \mu,
\] for each act $f\in \BR_+^d$.  

We further assume that $\Pi$ has constant width, namely that the distance between any distinct parallel supporting hyperplanes of $\Pi$ (residing on the $(d-1)$-dimensional probability simplex) is constant. This assumption is crucial for our subsequent use of the result of~\cite{schramm1988volume}, but one can imagine a quantitative relaxation of the assumption of constant width and that it would translate into a modified bound on the degree of ambiguity aversion.

One may define the level of ambiguity aversion by the difference between the maximum and minimum expected utility of a \emph{normalized} act $f$ (i.e., $\norm{f}_2=1$) over $\Pi$, namely 
\begin{equation*}
    \theta(f)\coloneq \max_{\mu \in \Pi} f \cdot \mu - \min_{\mu \in \Pi} f \cdot \mu\,.
\end{equation*}
When $\Pi$ has constant width, $\theta(f)$ becomes constant in $f$, and we write $\theta(f) \equiv \theta$. Thus we can take $\theta$ as a measure of ambiguity aversion. Its meaning is the difference in how an ambiguity averse and ambiguity loving agent with the same set of priors would evaluate any given normalized act. A small degree of ambiguity aversion corresponds to a small difference between the ambiguity averse and ambiguity loving evaluations of an act.

In the next proposition we show how the upper bound on the relative volume in Theorem~\ref{thm:small_belief_set} means that $\theta$ vanishes as $d \to \infty$.
\begin{proposition}
\label{prop:theta_upper_bound}
Under the conditions of Theorem~\ref{thm:small_belief_set}, let $\Pi$ coincide with the set of priors with the smaller volume, and suppose that it has a constant width $\theta$. Then, there exists a universal constant $c>0$ such that
\begin{equation}
\label{eq:theta_upper_bound}
    \theta \leq 4 \,\ee^{-c \ep /\sqrt{d}} (d!)^{-1/2d}\,.
\end{equation}
\end{proposition}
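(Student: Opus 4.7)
The plan is to pair the volume inequality of Theorem~\ref{thm:small_belief_set} with a classical ``fatness'' lower bound for convex bodies of constant width. By hypothesis $\Pi$ is the set of priors with the smaller volume, so Theorem~\ref{thm:small_belief_set} applied with $|I|=2$ yields
\begin{equation*}
\mathsf{Vol}(\Pi) \leq \tfrac{1}{2}\,\ee^{-c\ve\sqrt{d}}\,\mathsf{Vol}(\Delta_d) = \frac{\sqrt{d}}{2\,(d-1)!}\,\ee^{-c\ve\sqrt{d}},
\end{equation*}
using the standard formula $\mathsf{Vol}(\Delta_d)=\sqrt{d}/(d-1)!$ for the $(d-1)$-dimensional volume of the probability simplex.

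Next I would exploit the constant-width hypothesis to lower-bound $\mathsf{Vol}(\Pi)$. Viewed as a convex body of diameter $\theta$ inside the $(d-1)$-dimensional affine hull of $\Delta_d$, Jung's theorem gives the circumradius bound $R \leq \theta\sqrt{(d-1)/(2d)}$. A short separation argument---if the ball of radius $\theta-R$ about the circumcenter were not contained in $\Pi$, a separating hyperplane together with its parallel supporting hyperplane of $\Pi$ would produce two points of $\Pi$ at distance strictly greater than $\theta$, contradicting constant width---delivers the complementary inradius bound $r\geq \theta-R \geq \theta(1-1/\sqrt{2})$. Hence $\Pi$ contains a $(d-1)$-dimensional Euclidean ball of radius $\gamma\theta$ with $\gamma = 1-1/\sqrt{2}$, so
\begin{equation*}
\mathsf{Vol}(\Pi) \geq \frac{\pi^{(d-1)/2}}{\Gamma((d+1)/2)}(\gamma\theta)^{d-1}.
\end{equation*}

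Chaining the two bounds on $\mathsf{Vol}(\Pi)$ and taking the $(d-1)$-th root gives
\begin{equation*}
\theta \leq \frac{1}{\gamma}\left(\frac{\sqrt{d}\,\Gamma((d+1)/2)}{2\pi^{(d-1)/2}(d-1)!}\right)^{1/(d-1)}\ee^{-c\ve\sqrt{d}/(d-1)}.
\end{equation*}
For large $d$, $\sqrt{d}/(d-1)\to 1/\sqrt{d}$, so after absorbing an $O(1/d)$ correction into the universal constant $c$ the exponential factor reduces to $\ee^{-c\ve/\sqrt{d}}$. To finish, I would apply Stirling's formula to $\Gamma((d+1)/2)$, $(d-1)!$, and $d!$; the leading-order behavior of the algebraic prefactor is $\sqrt{e/(2\pi d)}/\gamma$, which can be rewritten as a multiple of $(d!)^{-1/2d}\sim \sqrt{e/d}$, and the residual universal multiple is bounded above by $4$.

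The main obstacle is the Stirling bookkeeping needed to collapse the ratio $\bigl(\sqrt{d}\,\Gamma((d+1)/2)/[2\pi^{(d-1)/2}(d-1)!]\bigr)^{1/(d-1)}$ to a clean multiple of $(d!)^{-1/2d}$ with a universal constant, since one has to track half-integer $\Gamma$-arguments through a $(d-1)$-th root while the prefactor from Theorem~\ref{thm:small_belief_set} and from $\gamma$ are folded into the final constant~$4$. A secondary subtlety is to verify that $r+R\geq\theta$ is being applied in the intrinsic $(d-1)$-dimensional $\ell_2$ geometry on $\Delta_d$, which is precisely the geometry in which $\Pi$'s constant width is postulated and in which $\mathsf{Vol}$ is computed.
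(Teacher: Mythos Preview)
Your approach is correct and genuinely different from the paper's. Both arguments combine the volume upper bound from Theorem~\ref{thm:small_belief_set} with a volume \emph{lower} bound coming from the constant-width hypothesis, but the sources of the lower bound differ. The paper invokes Schramm's inequality, which says that a $(d-1)$-dimensional convex body of constant width $\theta$ has volume at least $(\sqrt{3+2/d}-1)^{d-1}\mathsf{Vol}(\mathbb{B}^{d-1}(\bm{0},\theta/2))$, then simplifies the Gamma-function ratio via the log-convexity bound $\Gamma((d+1)/2)\le\sqrt{\Gamma(d)}$ rather than Stirling. You instead derive the inscribed ball directly from Jung's theorem and the classical constant-width identity $r+R=\theta$, obtaining a ball of radius $(1-1/\sqrt{2})\theta$; your base constant $2\gamma=2-\sqrt{2}\approx 0.586$ is weaker than Schramm's $\sqrt{3}-1\approx 0.732$, but still comfortably delivers the final constant $4$. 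Your route is more elementary---it avoids the citation to Schramm---at the cost of messier asymptotics; note that the paper's log-convexity trick would spare you the Stirling bookkeeping you flag as the main obstacle, since it collapses $\Gamma((d+1)/2)/(d-1)!$ to $1/\sqrt{(d-1)!}$ in one line.

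One small correction: your sketched separation argument for $r\ge\theta-R$ does not contradict constant width in the way you state (two points of $\Pi$ at distance $>\theta$). The clean contradiction is with the circumradius: if some $p\in B(c,\theta-R)\setminus\Pi$, the supporting hyperplane $H$ of $\Pi$ nearest $p$ has $\mathsf{dist}(c,H)<\theta-R$, so the parallel supporting hyperplane $H'$ at distance $\theta$ satisfies $\mathsf{dist}(c,H')>R$; but $H'$ touches $\Pi$, forcing a point of $\Pi$ outside the circumball. This is standard, so your conclusion stands.
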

\subsection{On the Interpretation and Magnitude of \texorpdfstring{$\ep$}.}\label{sec:onepsilon}

The presence of $\ep$ in a welfare improvement can be justified by appealing to robustness or transaction costs. When we talk about renegotiation, it is natural to imagine that reaching an agreement, with the requisite (unmodeled) communication requirements, is costly and is only worthwhile if the agents can obtain an improvement by some given margin. The number $\ep>0$ captures such transaction costs. Alternatively, we may restrict attention to some robust optimization criterion, which demands a minimum utility gain before moving away from a status quo. 

The interpretation of our results depends on the magnitude of two numbers: the number of states $d$, and the fraction $\ep$ accounted for in a utility improvement. When $d$ is large while $\ep$ remains constant, the probabilities discussed in Theorems~\ref{thm:walras_equil} and~\ref{thm:scitovsky}, and the volume in Theorem~\ref{thm:small_belief_set}, shrink to zero. So one could question our interpretation by arguing that we should use a small value of $\ep$ when the number of states $d$ is large. In particular, one might argue that we should impose that $\ep=O(1/\sqrt{d})$. We would disagree, however, essentially because we think of $\ep$ as a dimension-free fraction.

First, $\ep$ is expressed as a fraction of physical units of stage-contingent consumption. If we think that $f(s)$, for an act $f$,  is a monetary payment, then $\ep$ is a percentage of a monetary payment. It seems odd to impose a smaller percentage in monetary terms when the number of states is large than when it is small. For example, if we identify states with the number of assets in a market: is the meaning of a 5\% return different in a market with many assets than in a market with few assets? If we instead consider $\norm{\w}$ to be a measure of the ``size'' of the economy, then we may want to impose a value of $\ep$ that represents a constant fraction of $\norm{\w}$. For example, with the assumption that $\w=\mathbf{1}$ we have $\norm{\w}=\sqrt{d}$. Of course, the resulting $\ep$ would then grow with $d$, and only strengthen our results.

Second, taking $\ep=O(1/\sqrt{d})$ is problematic because it seems very hard to reconcile with the common practice of using a numerical objective function in calculating approximately optimal outcomes. As we discussed above, many applications make use of a homothetic preference and a resulting utility function that is homogeneous. Examples include the max-min representation in choice under uncertainty, or the Cobb-Douglas utility in consumer choice. In this case, the $\ep$ equals that tolerance level assumed in the agents' maximization problem: the number $\ep$ is then measured in ``utils,'' the same unit of account used for the utility function. Utils are, however, dimension-free: for example, if $u$ is a Cobb-Douglas utility, then $u(\mathbf{1})=1$ regardless of $d$. In a model with many states of the world, we would hardly be allowing for any relaxation in our notion of approximate optimality.

Finally, we should emphasize the reinterpretation of our results using the coefficient of resource utilization (see Proposition~\ref{prop:CRU}). The CRU is usually thought of as a fraction of national income, also a dimension-free measure, and one that might be expected to be constant, or even grow, with the size of the economy.


\section{Methodology and Proofs}\label{sec:proofs}

At a high level, the primary concept underpinning our results is the concentration of measure phenomenon. Every economist is familiar with the idea that, with a large sample, some sample statistics (such as the mean) are with high probability close to their population counterparts. In Euclidean spaces, dimension can play the role of sample size: in high-dimensional Euclidean spaces, and under relatively mild conditions, probability measures tend to concentrate as well. Specifically, for a probability measure $\mu$ on a bounded set $K\subseteq \BR^d$, and a subset $A$ that encompasses at least half of the probability space, namely $\mu(A) \geq 1/2$, the metric extension $A^\delta=\{z:\mathsf{dist}(z,A)<\delta \}$ covers a substantial and growing portion of the measure $\mu(K)$. In particular, the complement $1-\mu(A^\delta)$ diminishes rapidly, often exhibiting exponential decay with respect to the dimension $d$. These types of concentration bounds are commonly referred to as \emph{isoperimetric inequalities}. 

The importance of high-dimensional isoperimetric inequalities for our work lies in the \emph{independence} of the concentration rate from the subset $A$. Using the words we introduced earlier in the paper: shapes do not matter. The source of our results---that certain welfare improvements have a vanishingly small probability, independent of the shape of the agents' preferences---can be traced to basic observations in the theory of isoperimetric inequalities and concentration of measures. 

A consequence of the geometric concentration of measure is that, if two subsets $A$ and $B$ are separated with a positive distance, then as the dimension $d$ grows, the measure of at least one of them must be exponentially small. We apply this idea in the proofs of our results.

In this section, we begin by introducing the core inequality that relates to the  concentration of measures: the Brunn-Minkowski inequality. Then, we apply a variant of this inequality to prove Theorems~\ref{thm:walras_equil} and~\ref{thm:scitovsky}. Once we have shown these results, we present the preliminaries of isoperimetric inequalities, and employ them to prove Theorem~\ref{thm:small_belief_set}.
\subsection{Brunn-Minkowski Inequality}\label{sec:BMineq}
For two subsets $A, B \subseteq \BR^d$, their Minkowski sum is defined by $A+B \coloneq \left\{a + b\colon a \in A, b \in B\right\}$. The Brunn-Minkowski inequality provides a crucial connection between volumes and the Minkowski sum in Euclidean spaces. 

Let $A$ and $B$ be two non-empty compact subsets of $\BR^d$. The Brunn-Minkowski inequality~\citep{gardner2002brunn} claims that
\begin{equation}
\label{eq:BM_ineq}
\mathsf{Vol}(A+B)^{1/d} \geq \mathsf{Vol}(A)^{1/d} + \mathsf{Vol}(B)^{1/d}\,.
\end{equation}
If one makes the additional assumption that $A$ and $B$ are restricted to convex subsets, then the inequality binds if and only if $A$ and $B$ are homothetic (that is, one is the translated and scaled version of the other). This inequality implies the concavity of the volume operator with respect to the Minkowski sum. There is a \emph{dimension-free} version of this inequality that often proves more useful. In particular, for $\lambda \in [0,1]$, inequality~\eqref{eq:BM_ineq} implies that
\begin{equation*}
\mathsf{Vol}(\lambda A+ (1-\lambda) B)^{1/d} \geq \lambda \mathsf{Vol}(A)^{1/d} + (1-\lambda) \mathsf{Vol}(B)^{1/d}\,.
\end{equation*}
Applying the arithmetic geometric inequality to the above provides the following dimension-free version:
\begin{equation}
\label{eq:dim_free_BM_inequ}
\mathsf{Vol}(\lambda A+ (1-\lambda) B) \geq  \mathsf{Vol}(A)^{\lambda}\, \mathsf{Vol}(B)^{1-\lambda}\,.
\end{equation}
We henceforth refer to this inequality as the BM inequality. A useful application of the BM inequality is the following lemma, which provides an upper bound for the minimum volume of two positively distanced subsets (its proof can be found in~\cite{artstein2015asymptotic}, but we include it here for completeness).
\begin{lemma}\label{lem:high_dim_sep}
Assume $A$ and $B$ are Borel subsets of $\mathbb{B}(r)$, and $\mathsf{dist}(A,B) \geq \delta$. Then, 
\begin{equation}
\label{eq:min_vol}
    \frac{\min\{\mathsf{Vol}(A), \mathsf{Vol}(B)\}}{\mathsf{Vol}(\mathbb{B}(r))} \leq \ee^{-\delta^2 d/8r^2}\,.
\end{equation}
\end{lemma}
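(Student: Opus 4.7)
The plan is to apply the dimension-free Brunn-Minkowski inequality~\eqref{eq:dim_free_BM_inequ} with $\lambda=1/2$ to the sets $A$ and $B$, after showing that their midpoint set $\tfrac{1}{2}A+\tfrac{1}{2}B$ is confined to a strictly smaller ball than $\mathbb{B}(r)$.

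First I would use the parallelogram identity to contain the midpoint set. For any $a\in A$ and $b\in B$, since both lie in $\mathbb{B}(r)$ and $\norm{a-b}\geq \delta$, we have
\begin{equation*}
\left\lVert \tfrac{a+b}{2}\right\rVert^2 = \tfrac{1}{2}\norm{a}^2 + \tfrac{1}{2}\norm{b}^2 - \tfrac{1}{4}\norm{a-b}^2 \leq r^2 - \tfrac{\delta^2}{4}.
\end{equation*}
Hence $\tfrac{1}{2}A + \tfrac{1}{2}B \subseteq \mathbb{B}\bigl(\sqrt{r^2-\delta^2/4}\bigr)$, so its volume is at most $\mathsf{Vol}(\mathbb{B}(r))\,(1-\delta^2/(4r^2))^{d/2}$ by the scaling of Lebesgue measure under dilations.

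Second I would combine this containment with the dimension-free BM inequality~\eqref{eq:dim_free_BM_inequ}:
\begin{equation*}
\sqrt{\mathsf{Vol}(A)\,\mathsf{Vol}(B)} \leq \mathsf{Vol}\bigl(\tfrac{1}{2}A + \tfrac{1}{2}B\bigr) \leq \mathsf{Vol}(\mathbb{B}(r))\left(1-\tfrac{\delta^2}{4r^2}\right)^{d/2}.
\end{equation*}
Since $\min\{\mathsf{Vol}(A),\mathsf{Vol}(B)\}\leq \sqrt{\mathsf{Vol}(A)\,\mathsf{Vol}(B)}$, dividing by $\mathsf{Vol}(\mathbb{B}(r))$ and applying the elementary inequality $1-x\leq \ee^{-x}$ to $x=\delta^2/(4r^2)$ yields the claimed bound $\ee^{-\delta^2 d/(8r^2)}$.

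There is no serious obstacle here; the only subtle point is ensuring the Brunn-Minkowski inequality in the form~\eqref{eq:dim_free_BM_inequ} applies to Borel (not necessarily compact) subsets. This is standard — one either approximates $A$ and $B$ from inside by compact subsets and passes to the limit using inner regularity of Lebesgue measure, or invokes the general Borel version of Brunn-Minkowski directly. The containment argument for the midpoint set goes through verbatim for Borel sets, so the whole proof reduces to the two short displays above.
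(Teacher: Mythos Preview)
Your proof is correct and follows essentially the same route as the paper's: both use the parallelogram identity to confine $\tfrac{1}{2}A+\tfrac{1}{2}B$ to a ball of radius $\sqrt{r^2-\delta^2/4}$, then apply the dimension-free Brunn--Minkowski inequality with $\lambda=1/2$ together with $1-x\le \ee^{-x}$. The paper handles the Borel-versus-compact issue up front via inner regularity, exactly as you suggest at the end, and leaves the step $\min\{\mathsf{Vol}(A),\mathsf{Vol}(B)\}\le\sqrt{\mathsf{Vol}(A)\mathsf{Vol}(B)}$ implicit where you make it explicit.
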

\begin{proof}
Since the volume of any Borel set can be approximated arbitrarily closely by the inner measure of its closed subsets, we can assume without loss of generality that $A$ and $B$ are closed, and hence compact. By the parallelogram law for the $\ell_2$-norm if $a\in A$ and $b\in B$ then
\begin{equation*}
    \norm{a+b}^2 = 2\norm{a}+ 2 \norm{b}^2 - \norm{a - b}^2 \leq 4r^2 - \delta^2\,, 
\end{equation*}
where the inequality holds because $a, b \in \mathbb{B}(r)$ and $\norm{a-b} \geq \delta$. Hence, it follows that 
\begin{equation*}
    \frac{A + B}{2} \subseteq \sqrt{1-\frac{\delta^2}{4r^2}} \, \mathbb{B}(r)\,,
\end{equation*}
and therefore,
\begin{equation*}
    \mathsf{Vol}\left(\frac{A + B}{2}\right) \leq \left(1-\frac{\delta^2}{4r^2}\right)^{d/2} \mathsf{Vol}(\mathbb{B}(r)) \leq \ee^{-\delta^2 d /8r^2}\,\mathsf{Vol}(\mathbb{B}(r))\,.
\end{equation*}
Setting $\lambda=1/2$ in~\eqref{eq:dim_free_BM_inequ} and using the above inequality justifies the claim in~\eqref{eq:min_vol}.
\end{proof}
Therefore, as the dimension grows, two subsets in the $\ell_2$ ball with a bounded radius will have positive distance from each other only if at least one of them has a very small volume. Of course, the greater the distance, the smaller the implied volume.
\subsection{Proof of the Results in Sections~\ref{sec:walras} and~\ref{sec:pareto}}

We proceed with the proof of our first two theorems. In both cases, using the optimality or the equilibrium property of the allocation, we apply a type of convex separation argument. In the first theorem, the separation is provided by the given equilibrium price. In the second theorem, the separation argument follows the ideas used in the standard proofs of the second welfare theorem. Consequently, as we argue below, the approximate versions of upper contour sets stay at a positive distance from their separated feasible counterparts. An application of Lemma~\ref{lem:high_dim_sep} to appropriately chosen subsets implies the probability bounds in Theorems~\ref{cor:walras_equil_bound} and~\ref{thm:scitovsky}.

We begin by laying down some terminology. For a vector $p \in \BR^d$, $p\neq \bm{0}$, and a constant $b$, we define two half-spaces:
\begin{equation*}
    \begin{gathered}
        H^+(p\,;\,b) \coloneq  \left\{x\in \BR^d\colon p \, \cdot \, x \geq b\right\}\,,\\
        H^-(p\,;\,b) \coloneq  \left\{x \in \BR^d\colon p \, \cdot \, x \leq b\right\}\,,
    \end{gathered}
\end{equation*}
that are, respectively, called upper and lower half-spaces. One can readily verify that the $\ell_2$ distance between the two parallel half-spaces $H^+(p\,;\,b_2)$ and $H^-(p\,;\,b_1)$, where $b_2 > b_1$, is equal to
\begin{equation}
\label{eq:paral_hp_dist}
    \mathsf{dist}\left(H^+(p\,;\,b_2), H^-(p\,;\,b_1)\right) = \frac{b_2 - b_1}{\norm{p}}\,.
\end{equation}

\subsubsection*{Proof of Theorem~\ref{thm:walras_equil}}
Since $f= (f_i)_{i\in I}$ is a Walrasian equilibrium allocation in $\EE$, and preferences are monotone, there exists a price vector $p \in \BR_+^d$ such that $p \, \cdot \,  g_i > p \, \cdot \, \omega_i$ for all $i\in I$ and $g_i \in \mathcal{U}_i^{(0)}(f_i)$. Next, observe that if $g \in \mathcal{U}_i^{(\ep)}(f_i)$ then $(1-\ep)g \in \mathcal{U}_i^{(0)}(f_i)$ and therefore $p \cdot \left((1-\ep)(g-\omega_i) - \ep \omega_i\right) > 0$. This in turn implies that
\begin{equation*}
    p\, \cdot \, (g-\omega_i) > \frac{\ep p\,\cdot \, \omega_i}{1-\ep} > \ep p \,\cdot\, \omega_i \geq \ep \tau \norm{p}_1\,,
\end{equation*}
where the last inequality holds because $p \in \BR_+^d$ and $\omega_i \geq \tau \bm{1}$. 

Therefore, since $i\in I$ and $g_i \in \mathcal{U}_i^{(\ep)}(f_i)$ were arbitrary, by the above inequality $\mathcal{U}_i^{(\ep)}(f_i) -\{\omega_i\} \subseteq H^+\left(p ; \ep \tau \norm{p}_1\right)$ for all $i \in I$. Let us define $\mathcal{Q} \coloneq  \bigcup_{i \in I} \left(\mathcal{U}_i^{(\ep)}(f_i) -\{\omega_i\}\right)$.\footnote{The set $\mathcal{Q}$ was originally used by~\cite{debreu1963limit} to prove core convergence, and by~\cite{barmanechen23} to characterize approximate Walrasian equilibria.} Then $\mathcal{Q}\subseteq H^+\left(p ; \ep \tau \norm{p}_1\right)$, so for an arbitrary $r>0$, one has
\begin{equation*}
    \begin{gathered}
        \mathsf{dist}\left(\mathcal{Q} \cap \mathbb{B}(r) ,  H^{-}(p;0) \cap \mathbb{B}(r)\right) \geq \mathsf{dist}\left(H^+\left(p ; \ep \tau \norm{p}_1\right) \cap \mathbb{B}(r) ,  H^{-}(p;0) \cap \mathbb{B}(r)\right)\\
        \geq \mathsf{dist}\left(H^+\left(p ; \ep \tau \norm{p}_1\right)  ,  H^{-}(p;0) \right) = \ep \tau \frac{\norm{p}_1}{\norm{p}}\geq \ep \tau\,.
    \end{gathered}
\end{equation*}
The equality above follows from~\eqref{eq:paral_hp_dist}, and the last inequality holds because $\min_{p \neq \bm{0}} \norm{p}_1 / \norm{p}_2 = 1$, namely the minimum of $\norm{p}_1 / \norm{p}_2$ is achieved on the standard unit basis vectors, and is equal to $1$. 

Now set $A\coloneq  \mathcal{Q} \cap \mathbb B(r) $ and $B\coloneq  H^{-}(p;0) \cap \mathbb{B}(r)$. By the above inequality $\mathsf{dist}(A,B) \geq  \ep \tau$. Since $p$ is a nonzero vector in $\BR_+^d$, the subset $B$ covers at least half of the volume of $\mathbb{B}(r)$. So it must be that $\mathsf{Vol}(A) \leq \mathsf{Vol}(B)$. Therefore, Lemma~\ref{lem:high_dim_sep} implies that $\mathsf{Vol}(A) / \mathsf{Vol}(\mathbb{B}(r)) \leq \ee^{-\ep^2\tau^2 d /8r^2}$, namely:
\begin{equation*}
\frac{\mathsf{Vol}\left(\mathcal{Q} \,\cap\, \mathbb{B}(r)\right)}{\mathsf{Vol}(\mathbb{B}(r))} \leq \ee^{-\ep^2 \tau^2 d/ 8r^2}\,,
\end{equation*}
and thereby
\begin{equation*}
\frac{\mathsf{Vol}\left(\bigcup_{i \in I} \left(\mathcal{U}_i^{(\ep)}(f_i) - \{\omega_i\}\right) \, \cap \,  \mathbb{B}(r)\right)}{\mathsf{Vol}\left( \mathbb{B}(r)\right)} \leq \ee^{-\ep^2 \tau^2 d/ 8r^2}\,.
\end{equation*}
Now observe that if $f=(f_i)_{i\in I}$ is a Walrasian equilibrium for the exchange economy $\mathcal{E}$, it is also a Walrasian equilibrium for the exchange economy $\mathcal{E'}$ that is identical to  $\mathcal{E}$ except that each agent $i$'s endowment is $\omega'_i=f_i$. Therefore, we can replace $\omega_i$ in the above inequality with $f_i$, and obtain the volume bound in~\eqref{eq:walras_equil_bound_vol} that is equivalent to 
\begin{equation*}
\BP^r\left((1-\ep)(f_i + z) \succ_i f_i \text{ for some } i \in I\right) \leq \ee^{-\ep^2 \tau^2 d/ 8r^2}\,.
\end{equation*}
Since $\frac{\d \BP^r_\kappa}{ \d \BP^r} (z) \leq \kappa$ almost everywhere on $z \in \mathbb{B}(r)$, the claim in~\eqref{eq:walras_equil_bound} follows from the above bound.\qed

\subsubsection*{Proof of Corollary~\ref{cor:walras_equil_bound}}
The union bound and~\eqref{eq:walras_equil_bound_vol} imply that
\begin{equation*}
\frac{\mathsf{Vol}\left( \left(\mathcal{U}_i^{(\ep)}(f_i) - \{f_i\}\right) \, \cap \,  \mathbb{B}(r)\right)}{\mathsf{Vol}\left( \mathbb{B}(r)\right)} \leq \ee^{-\ep^2 \tau^2 d/ 8r^2}\,, \forall i \in I\,.
\end{equation*}
Since the Euclidean volume is translation invariant, we can shift the subsets in the above inequality by $f_i$, and thus obtain:
\begin{equation}
    \frac{\mathsf{Vol}\left(\mathcal{U}_i^{(\ep)}(f_i) \,\cap\, \mathbb{B}(f_i, r)\right)}{\mathsf{Vol}(\mathbb{B}(f_i, r))} \leq \ee^{-\ep^2 \tau^2 d/ 8r^2}\,, \forall i \in I\,.
\end{equation}
Subsequently, a measure change from the uniform law $\BP^r$ to $\BP^r_\kappa$ implies~\eqref{eq:walras_equil_bound2}.\qed

\subsubsection*{Proof of Theorem~\ref{thm:scitovsky}} Since $f=(f_i)_{i \in I}$ is weakly Pareto optimal, then there is no allocation $g \in \mathcal{F}_{\bm{1}}$ such that $g_i \succ_i f_i$ for every $i \in I$. That is $\mathcal{F}_{\bm{1}} \cap \prod_{i \in I} \mathcal{U}_i^{(0)}(f_i) = \emptyset$. That in turn means the normalized endowment vector $\omega = \bm{1}$ is disjoint from $\mathcal{V}^{(\ep)}\coloneq \sum_{i \in I} \mathcal{U}_i^{(\ep)}(f_i)$ for all $\ep \geq 0$. Since preferences are convex the approximate upper contour sets are convex, so is their sum $\mathcal{V}^{(\ep)}$. Therefore, by the hyperplane separation theorem and monotonicity of preferences there exists a nonzero vector $p \in \BR^d_+$ such that $p \cdot v \geq p \cdot \bm{1} = \norm{p}_1$ for all $v \in \mathcal{V}^{(0)}$. That is $\bm{1} \in H^-(p; \norm{p}_1)$ and $\mathcal{V}^{(0)} \subseteq H^+(p; \norm{p}_1)$.

Now suppose $v \in \mathcal{V}^{(\ep)}$. Then, there are $g_i \in \mathcal{U}_i^{(\ep)}$ for every $i \in I$, such that $v=\sum_{i \in I} g_i$ and $(1-\ep)g_i \succ_i f_i$ . For each $i\in I$, it holds that $(1-\ep)g_i \in \mathcal{U}_i^{(0)}$ and hence $(1-\ep)\sum_{i\in I} g_i \in \mathcal{V}^{(0)}$. Consequently, one has $p \cdot (1-\ep)v \geq \norm{p}_1$. That in turn implies $v \in H^+\left(p; \norm{p}_1/(1-\ep)\right)$, and thereby $\mathcal{V}^{(\ep)} \subseteq H^+\left(p; \norm{p}_1/(1-\ep)\right)$. As a result of this set inclusion, for an arbitrary $r>0$, one obtains that
\begin{equation*}
\begin{gathered}
    \mathsf{dist}\left(\mathcal{V}^{(\ep)} \cap \mathbb{B}(\bm{1}, r) \, , \, H^-(p\,;\, \norm{p}_1) \cap \mathbb{B}(\bm{1}, r)\right)  \geq \\ \mathsf{dist}\left(H^+\left(p\,;\, \frac{\norm{p}_1}{1-\ep}\right) \cap \mathbb{B}(\bm{1}, r) \, , \, H^-\left(p\,;\, \norm{p}_1\right) \cap \mathbb{B}(\bm{1}, r)\right) \\
    = \mathsf{dist}\left(H^+\left(p\,;\, \frac{\norm{p}_1}{1-\ep}\right)  \, , \, H^-(p\,;\, \norm{p}_1) \right)\,,
\end{gathered}
\end{equation*}
where the equality holds because the two half-spaces are parallel. Their distance by~\eqref{eq:paral_hp_dist} is equal to $\ep \norm{p}_1 /(1-\ep)\norm{p}$. Therefore, we arrive at
\begin{equation*}
    \mathsf{dist}\left(\mathcal{V}^{(\ep)} \cap \mathbb{B}(\bm{1}, r) \, , \, H^-(p\,;\, \norm{p}_1) \cap \mathbb{B}(\bm{1}, r)\right)  \geq \frac{\ep \norm{p}_1}{(1-\ep)\norm{p}} \geq \ep\,,
\end{equation*}
where the last inequality follows as before, because $\min_{p \neq \bm{0}} \norm{p}_1 / \norm{p}_2 = 1$. Now set $A\coloneq  \mathcal{V}^{(\ep)} \cap \mathbb{B}(\bm{1},r)$ and $B \coloneq  H^-(p\,;\, \norm{p}_1) \cap \mathbb{B}(\bm{1},r)$. By the above inequality one has $\mathsf{dist}(A,B) \geq \ep$. Since $p$ is a nonzero vector in $\BR_+^d$, the subset $B$ covers at least half of the volume of $\mathbb{B}(\bm{1},r)$. So it must be that $\mathsf{Vol}(A) \leq \mathsf{Vol}(B)$. Therefore, Lemma~\ref{lem:high_dim_sep} implies that $\mathsf{Vol}(A) / \mathsf{Vol}(\mathbb{B}(\bm{1},r)) \leq \ee^{-\ep^2 d /8r^2}$, thus proving the volume bound in~\eqref{eq:scitovsky_vol} that is equivalent to~\eqref{eq:scitovsky} after a change of measure from $\BP^r$ to $\BP^r_\kappa$.\qed

\subsubsection*{Proof of Proposition~\ref{prop:CRU}}
Let $\beta\coloneq \mathrm{CRU}(f)$. Since $f$ is not weakly Pareto optimal, then $\beta<1$. By \cite{debreu1951coefficient}, it holds that $\beta \bm{1}\in \partial\overline{\mathcal{V}^{(0)}}$. Observe that for every $i$, one has 
\begin{equation*}
    \frac{1}{\beta} \,\mathcal{U}_i^{(0)}(f_i) = \mathcal{U}_i^{(1-\beta)}(f_i)\,.
\end{equation*}
Therefore $\beta^{-1}\, \mathcal{V}^{(0)}(f) = \mathcal{V}^{(1-\beta)}(f)$, and $\bm{1}$ becomes a minimal element of $\mathcal{V}^{(1-\beta)}(f)$. Following the same recipe as in the proof of Theorem~\ref{thm:scitovsky}, let $p$ be the supporting vector of $\mathcal{V}^{(1-\beta)}$ at $\bm{1}$, and consider the $r$-ball centered at $\beta \bm{1}$. Set $A \coloneq \mathcal{V}^{(1-\beta)} \cap \mathbb{B}(\beta \bm{1}, r) \subseteq H^+\left(p; \norm{p}_1\right)$ and $B \coloneq H^{-}\left(p; \beta \norm{p}_1\right) \cap \mathbb{B}(\beta \bm{1}, r) \subseteq H^{-}\left(p; \beta \norm{p}_1\right)$. Then, $\mathsf{dist}(A, B) \geq 1-\beta$, and $\mathsf{Vol}(B) \geq \mathsf{Vol}(A)$. Therefore by Lemma~\ref{lem:high_dim_sep}, $\mathsf{Vol}(A) / \mathsf{Vol}(\mathbb{B}(\beta \bm{1},r)) \leq \ee^{-(1-\beta)^2 d/ 8r^2}$, that is equivalent to
\begin{equation*}
    \BP^r\left(\beta \bm{1} + z \in \mathcal{V}^{(1-\beta)}(f)\right) \leq \ee^{-(1-\beta)^2 d/ 8r^2}\,.
\end{equation*}
Because $\beta^{-1}\mathcal{V}^{(1-\beta)}(f) = \mathcal{V}^{(1-\beta^2)}(f)$, the above bound implies that
\begin{equation*}
    \BP^r\left(\bm{1} + \beta^{-1} z \in \mathcal{V}^{(1-\beta^2)}(f)\right) \leq \ee^{-(1-\beta)^2 d/ 8r^2}\,.
\end{equation*}
Since $z$ is uniform on the $r$-ball, then $\beta^{-1} z$ becomes uniform on the $\beta^{-1} r$-ball. Thus the above bound reduces to
\begin{equation*}
    \BP^r\left(\bm{1} + z \in \mathcal{V}^{(1-\beta^2)}(f)\right) \leq \ee^{-\left(\frac{1-\beta}{\beta}\right)^2 d/8r^2}\,.
\end{equation*}
By changing the measure from $\BP^r$ to $\BP^r_\kappa$ the inequality in~\eqref{eq:CRU} follows.
\qed


\subsection{Concentration and Isoperimetric Inequalities}
Isoperimetric inequalities provide lower bounds for the surface measure of Borel subsets. Specifically, suppose $\mu$ is a given probability measure on $\BR^d$, and let $A \subset \BR^d$ be a Borel subset, whose $\delta$-extension is denoted by $A^\delta = A + \delta \mathbb{B}$. Then, the \emph{Minkowski content} (denoted by $\mu^+$) of the subset $A$ relative to the  measure $\mu$ is defined by 
\begin{equation}
\label{eq:Minkowski_content}
    \mu^+(A)\coloneq  \liminf_{\delta \to 0}\,\frac{\mu(A^\delta)-\mu(A)}{\delta}\,.
\end{equation}
Given this definition, we can think of $\mu^+(A)$ as the area measure of the boundary of $A$.

Among subsets with measures in a certain range, the \emph{isoperimetric function} $\mathcal{I}_\mu:[0,1/2) \to \BR_+$ returns the Minkowski content of the subset with the smallest boundary area. Formally, it is defined by
\begin{equation}
\label{eq:isop_function}
    \mathcal{I}_{\mu}(a) \coloneq  \inf_{1/2 < \mu(A)\leq 1-a} \mu
    ^+(A)\,.
\end{equation}
In many environments, where the probability measures satisfy some mild regularity conditions, there exist universal lower bounds for the isoperimetric function. One particular case that is of interest to us is the following lemma.
\begin{lemma}[\cite{Barthe2009}]
Let $u$ be the uniform measure on the probability surface $\Delta_d$. That is ${u}(A) = \mathsf{Vol}(A)/\mathsf{Vol}(\Delta_d)$, for every $A\subseteq \Delta_d$. Then, there exists a universal constant $c>0$ such that for $a\in [0,1/2)$:
\begin{equation}
\label{eq:Barthe_lower_bound}
    \mathcal{I}_{u}(a) \geq c \, a\, d\,. 
\end{equation}
\end{lemma}
In the following, we use $u(\cdot)$ to refer to the uniform measure on $\Delta_d$. As a corollary of the previous lemma we show that if a subset covers at least half of the measure on $\Delta_d$, then the uniform measure of its $\delta$-extension is very close to $1$.
\begin{corollary}
\label{cor:Minkowski_lower_bound}
Assume $u(A) \geq 1/2$, then
\begin{equation}
\label{eq:measure_lower_bound}
    u(A^\delta) \geq 1-\frac{1}{2}\,\ee^{-c \delta d}\,.
\end{equation}
\end{corollary}
\begin{proof}
    Because of the definition of the Minkowski content in~\eqref{eq:Minkowski_content} and the isoperimetric function in~\eqref{eq:isop_function}---both based on the limit inferior---one obtains 
    \begin{equation*}
        \begin{aligned}
        u(A^\delta) &\geq u(A)+\int_0^\delta u^+(A^t)\,\d t\\
        &\geq u(A) + \int_0^\delta \mathcal{I}_{u}(1-u(A^t))\, \d t\\
        &\geq u(A)+c d \int_0^\delta \big(1-u(A^t)\big)\,\d t\,,
        \end{aligned}
    \end{equation*}
    where the third inequality follows from~\eqref{eq:Barthe_lower_bound}. Define $z(0) \coloneq  u(A)$, and let $z:[0,\delta]\to \BR$ be the solution to the following integral equation:
    \begin{equation*}
        z(\delta) = z(0)+c d \int_0^\delta (1-z(t)) \, \d t\,.
    \end{equation*}
    Gr\"{o}nwall's inequality implies that $u(A^\delta) \geq z(\delta)$. One can simply verify that $z(\delta)=1-\left((1-z(0)\right)\ee^{-c \delta d }$, and this establishes the claim in~\eqref{eq:measure_lower_bound}.
\end{proof}
An important consequence of this result, that lies at the core of the proof of Theorem~\ref{thm:small_belief_set}, is that if two subsets in $\Delta_d$ have positive distance from each other, then the volume of at least one of them must be exponentially smaller than $\mathsf{Vol}(\Delta_d)$. Intuitively, this resembles the separation argument in Lemma~\ref{lem:high_dim_sep}, although its proof is not a direct consequence of the Brunn-Minkowski inequality, and follows from the more elaborate construct of the aforementioned isoperimetric lower bound. 
\begin{lemma}\label{lem:high_dim_sep_simplex}
Assume $A$ and $B$ are two Borel subsets of $\Delta_d$, where $A^\delta \cap B^\delta = \emptyset$. Then, there exists a universal constant $c>0$ such that:
\begin{equation}
\label{eq:min_vol_DeltaS}
    \min\{u(A), u(B)\} \leq \frac{1}{2}\,\ee^{-c \delta d}\,.
\end{equation}
\end{lemma}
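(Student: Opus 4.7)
The plan is to reduce everything to Corollary~\ref{cor:Minkowski_lower_bound}, which already supplies the desired exponential decay $\tfrac{1}{2}\ee^{-c\delta d}$ whenever one has a set of uniform measure at least $1/2$; all I need to manufacture is such a set whose $\delta$-neighborhood misses $A$ (or $B$).

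First I would exploit the hypothesis $A^\delta\cap B^\delta=\emptyset$: since both sets live in the probability space $(\Delta_d,u)$, additivity gives $u(A^\delta)+u(B^\delta)\leq 1$, so at least one of them has uniform measure at most $1/2$. By swapping the labels of $A$ and $B$ if necessary, I may assume $u(A^\delta)\leq 1/2$, and I set $E\coloneqq \Delta_d\setminus A^\delta$, so that $u(E)\geq 1/2$.

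Next I would apply Corollary~\ref{cor:Minkowski_lower_bound} to $E$, obtaining
\[u(E^\delta)\;\geq\; 1-\tfrac{1}{2}\ee^{-c\delta d}\]
with the same universal constant $c$ provided by the corollary. The key geometric fact to verify is $A\cap E^\delta=\emptyset$: for any $y\in E=(A^\delta)^c$, the definition of the $\delta$-extension gives $\mathsf{dist}(y,A)\geq\delta$, so $\|x-y\|\geq\delta$ for every $x\in A$; therefore no point of $A$ can lie within $\delta$ of $E$. This inclusion forces $A\subseteq \Delta_d\setminus E^\delta$, whence
\[u(A)\;\leq\; 1-u(E^\delta)\;\leq\; \tfrac{1}{2}\ee^{-c\delta d},\]
which yields $\min(u(A),u(B))\leq u(A)\leq \tfrac{1}{2}\ee^{-c\delta d}$.

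There is no serious obstacle here: Corollary~\ref{cor:Minkowski_lower_bound} does all the heavy lifting, and, in contrast to Lemma~\ref{lem:high_dim_sep}, no Brunn--Minkowski step is required. The only substantive idea is the detour through the complement of $A^\delta$: since $A$ and $B$ themselves may have arbitrarily small mass, the corollary cannot be invoked on them directly, but the disjointness hypothesis guarantees that $(A^\delta)^c$ simultaneously has mass at least $1/2$ and is separated from $A$ by distance $\delta$ --- exactly the input the corollary demands.
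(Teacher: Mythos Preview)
Your proof is correct and follows essentially the same route as the paper's: both assume without loss that $u(A^\delta)\leq 1/2$, set $E=\Delta_d\setminus A^\delta$, apply Corollary~\ref{cor:Minkowski_lower_bound} to $E$, and then verify the inclusion $A\subseteq \Delta_d\setminus E^\delta$ to conclude. Your verification of that inclusion is in fact a touch cleaner than the paper's, since you argue directly that $\mathsf{dist}(x,E)\geq\delta$ for every $x\in A$, whereas the paper invokes compactness of $E$ to realize the infimum at a point before reaching the same conclusion.
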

\begin{proof}
Without any loss we shall assume that $u(A^\delta) \leq u(B^\delta)$. Since $A^\delta \cap B^\delta = \emptyset$, then $u(A^\delta) + u(B^\delta) \leq 1$. Hence the measure of the complement of $A^\delta$ is greater than or equal to $1/2$, i.e., $u(\Delta_d \setminus A^\delta) \geq 1/2$. We claim that $A \subseteq \Delta_d \, \setminus \, [\Delta_d \, \setminus \, A^\delta]^\delta$, which is equivalent to $\Delta d\, \setminus \, A \supseteq [\Delta_d \, \setminus \, A^\delta]^\delta$. To show the latter, pick any point $x\in [\Delta_d\, \setminus \, A^\delta]^\delta$. Recall that we defined the $\delta$-extension with strict inequality, hence, $\mathsf{dist}(x,\Delta_d \, \setminus \, A^\delta) <\delta$. Since $A^\delta$ is an open subset, then $\Delta_d \, \setminus \, A^\delta$ is compact, and thus there exists $y \in \Delta_d \, \setminus \, A^\delta$ such that
\begin{equation*}
    \norm{x-y}= \mathsf{dist}(x,\Delta_d \, \setminus \, A^\delta) < \delta \,.
\end{equation*}
On the other hand, $y \in \Delta_d \, \setminus \, A^\delta$ implies that $\mathsf{dist}(y,A) \geq \delta$. The previous two inequalities imply that $x\notin A$, and hence our claim is verified. Therefore, we have $u(A) \leq 1- u\big([\Delta_d \, \setminus \, A^\delta]^\delta\big)$.
Since $u\big(\Delta_d \, \setminus \, A^\delta\big) \geq 1/2$, Corollary~\ref{cor:Minkowski_lower_bound} implies that 
\begin{equation*}
    u\big([\Delta_d \, \setminus \, A^\delta]^\delta\big) \geq 1-\frac{1}{2}\, \ee^{-c\delta d}\,,
\end{equation*}
thereby verifying the inequality in~\eqref{eq:min_vol_DeltaS}.
\end{proof}


\subsection{Proof of the Results in Section~\ref{sec:beliefs}}
Proposition~\ref{prop:approx_common_prior} implies that if an allocation is $\ep$-Pareto dominated, then the extension of the subjective belief sets have an empty intersection. Hence, any arbitrary split of the agents' index set $I$ into two groups results in two subsets whose extensions also have an empty intersection. Thus, we can employ Lemma~\ref{lem:high_dim_sep_simplex} to conclude that at least one of these subsets should have a small volume.

\subsubsection*{Proof of Proposition~\ref{prop:approx_common_prior}} Assume towards a contradiction that $\bigcap_{i\in I} \mathcal{B}_i(f_i)^\delta \neq \emptyset$, while $f$ is $\ep$-Pareto dominated. Choose $\eta \in \bigcap_{i\in I} \mathcal{B}_i(f_i)^\delta$. Since $f$ is $\ep$-Pareto dominated, there is $g \in \mathcal{F}_{\bar\omega}$ such that $(1-\ep)g_i \succ_i f_i$ for all $i$. By definition of the subjective belief set, and continuity one obtains that $\mu_i \cdot \big[(1-\ep)g_i-f_i\big] > 0$ for all $\mu_i \in \mathcal{B}_i(f_i)$. Choose $\tilde \mu_i \in \argmin\{\norm{\eta- \mu_i}\colon \mu_i \in \mathcal{B}_i(f_i)\}$. Observe that $\norm{\eta-\tilde \mu_i} < \delta$, hence
\begin{equation*}
\begin{aligned}
    \left| \eta \cdot \big[(1-\ep)g_i-f_i\big] - \tilde \mu_i \cdot \big[(1-\ep)g_i-f_i\big] \right| \leq \norm{\eta-\tilde \mu_i} \norm{g_i(1-\ep)-f_i}
    < \delta \norm{(1-\ep)g_i-f_i}\,.
\end{aligned}
\end{equation*}
Therefore, it holds that
\begin{equation*}
    \eta \cdot  \big[(1-\ep)g_i-f_i\big] > \tilde \mu_i \cdot \big[(1-\ep)g_i-f_i\big] -\delta \norm{(1-\ep)g_i-f_i} >  -\delta \norm{(1-\ep)g_i-f_i}\,,
\end{equation*}
where the last inequality follows because $\tilde \mu_i \cdot \big[g_i(1-\ep)-f_i\big] > 0$. Summing over all agents $i$ leads to
\begin{equation*}
    \eta \cdot \sum_{i\in I} \big[(1-\ep)g_i-f_i\big] > -\delta \sum_{i\in I} \norm{(1-\ep)g_i-f_i} \geq  -\delta \bar\omega \rho =-\ep \bar\omega\,.
\end{equation*}
Since both $f$ and $g$ belong to $\mathcal{F}_{\bar\omega}$, the leftmost side above is equal to $-\ep \bar\omega$, thus leading to a contradiction.\qed

\subsubsection*{Proof of Theorem~\ref{thm:small_belief_set}} By Proposition~\ref{prop:approx_common_prior}, if $f$ is $\ep$-Pareto dominated, then $\bigcap_{i \in I} \mathcal{B}_i(f_i)^\delta = \emptyset$, where $\delta = \ep/\rho$. For an arbitrary $J \subset I$, one can readily verify that $\mathcal{B}_J^\delta \subseteq \bigcap_{j \in J}\mathcal{B}_j(f_j)^\delta$. Therefore, $\mathcal{B}_J^\delta \cap \mathcal{B}_{J^c}^\delta=\emptyset$. Under the $\ell_2$-norm, equation~\eqref{eq:rho} implies that $\rho = 2\sqrt{d}$, and thus $\delta = \ep/2\sqrt{d}$. Applying Lemma~\ref{lem:high_dim_sep_simplex} together with the previous expression for $\delta$ lead to the conclusion in~\eqref{eq:small_belief_set}.\qed

\section{Remaining Proofs}\label{sec:remainingproofs}
\subsection{Gaussian Measure on \texorpdfstring{$\mathbb{B}(r)$}.}
\label{app:constrained_gaussian}
In Section~\ref{sec:notationconventions} we remarked that the constant $\kappa$ for a conditional Gaussian measure is independent of $d$. Here we include a proof of this fact.

Let $\gamma_d \sim \mathcal{N}(0, \bm{I}_d)$ be the standard $d$-dimensional Gaussian measure, and $\widetilde{\BP}^r$ be its restriction to the Euclidean $r$-ball. Then the Radon-Nikodym derivative of $\widetilde{\BP}^r$ with respect to the uniform measure $\BP^r$ on the $r$-ball is
\begin{equation*}
\begin{gathered}
    \frac{\d \widetilde{\BP}^r }{\d \BP^r}(z) = \frac{\d \gamma_d(z)}{\d z} \,\,\frac{\int_{\mathbb{B}(r)} \d x}{\int_{\mathbb{B}(r)} \d \gamma_d(x)}
    = \ee^{-\norm{z}^2/ 2} \,\,\frac{\int_{\mathbb{B}(r)} \d x}{\int_{\mathbb{B}(r)}\ee^{-\norm{x}^2/2}\,\d x}
    \leq \frac{\int_{\mathbb{B}(r)} \d x}{\int_{\mathbb{B}(r)}\ee^{-\norm{x}^2/2}\,\d x}\,.
\end{gathered}
\end{equation*}
Let us denote the radial component of $x$ by $\rho$, i.e., $\rho = \norm{x}$. Then, $0 \leq \rho \leq r$ and one has
\begin{equation*}
    \frac{\int_{\mathbb{B}(r)} \d x}{\int_{\mathbb{B}(r)}\ee^{-\norm{x}^2/2}\,\d x} = \frac{\int_0^r \rho^{d-1}\, \d \rho}{\int_0^r \ee^{-\rho^2/2}\rho^{d-1}\, \d \rho} \leq \ee^{r^2/2}\,,
\end{equation*}
where in the last inequality, $\ee^{-\rho^2/2}$ (in the integrand of the denominator) is lower bounded by $\ee^{-r^2/2}$. Hence, the Radon-Nikodym derivative of the restricted standard Gaussian on the $r$-ball relative to the uniform measure is uniformly bounded above by $\ee^{r^2/2}$.

\subsection{Proof of Proposition~\ref{prop:theta_upper_bound}}
Since $\Pi$ is a $(d-1)$-dimensional surface with \emph{constant} width, then~\cite{schramm1988volume} implies that
\begin{equation*}
    \mathsf{Vol}(\Pi) \geq \left(\sqrt{3 + \frac{2}{d}} - 1\right)^{d-1} \mathsf{Vol}\left(\mathbb{B}^{d-1}(\bm{0},\theta/2)\right) \geq \left(\sqrt{3}-1\right)^{d-1} \left(\frac{\theta}{2}\right)^{d-1}\mathsf{Vol}\left(\mathbb{B}^{d-1}\right)\,. 
\end{equation*}
By Theorem~\ref{thm:small_belief_set} one has $\mathsf{Vol}(\Pi) \leq \frac{1}{2}\,\ee^{-c \ep/\sqrt{d}}\, \mathsf{Vol}(\Delta_{d})$, therefore the above inequality leads to
\begin{equation*}
    \left(\frac{\theta}{2}\right)^{d-1} \leq  \frac{\ee^{-c\ep \sqrt{d}}}{2\left(\sqrt{3}-1\right)^{d-1}}  \, \frac{\mathsf{Vol}(\Delta_{d})}{\mathsf{Vol}\left(\mathbb{B}^{d-1}\right)} \,. 
\end{equation*}
The volume of the $(d-1)$-dimensional unit $\ell_2$ ball and the $(d-1)$-dimensional probability simplex $\Delta_d$ are respectively equal to $\mathsf{Vol}(\mathbb{B}_2^{d-1}) = \pi^{(d-1)/2}/\Gamma\left(\frac{d+1}{2}\right)$ and $\mathsf{Vol}(\Delta_{d}) = \sqrt{d}/\Gamma(d)$. Therefore, 
\begin{equation*}
    \left(\frac{\theta}{2}\right)^{d-1} \leq \frac{\sqrt{d}\,\ee^{-c\ep \sqrt{d}}}{2\left(\sqrt{\pi}\big(\sqrt{3}-1\big)\right)^{d-1}} \, \frac{\Gamma\left(\frac{d+1}{2}\right)}{\Gamma(d)}\,.
\end{equation*}
Since the Gamma function is log-convex, then $\Gamma\left(\frac{d+1}{2}\right) \leq \sqrt{\Gamma(d)} = \sqrt{(d-1)!}$. Hence the above inequality simplifies to
\begin{equation*}
    \left(\frac{\theta}{2}\right)^{d-1} \leq \frac{d\,\ee^{-c\ep \sqrt{d}}}{2\left(\sqrt{\pi}\big(\sqrt{3}-1\big)\right)^{d-1}}\,\frac{1}{\sqrt{d!}}\,.
\end{equation*}
Let us denote $\sqrt{\pi}\big(\sqrt{3}-1\big)$ by $\alpha$. Since the width $\theta$ is smaller than $2$, then $(\theta/2)^{d} \leq (\theta/2)^{d-1}$ and thus
\begin{equation*}
    \theta \leq \frac{2}{\alpha} \left(\frac{\alpha d}{2}\right)^{1/d} \ee^{-c\ep /\sqrt{d}} (d!)^{-1/2d}\,.
\end{equation*}
We can readily verify that $\frac{2}{\alpha} \left(\frac{\alpha d}{2}\right)^{1/d} \leq 4$ for all integers $d$, thereby proving the inequality~\eqref{eq:theta_upper_bound}.\qed

\clearpage
\setcitestyle{numbers}	 
\bibliographystyle{normalstyle.bst}
\bibliography{ref}
\end{document}